\newcommand{\blind}{1}
\newcounter{cntr}
\newtheorem{Theorem}{Theorem}
\newtheorem{Proposition}{Proposition}
\newcommand{\reals}{\mathbb R}
\newcommand{\E}[2][]{E_{#1}\left(#2\right)}
\newcommand*{\cT}{\mathcal{T}}
\newcommand{\T}{\cT}
\newcommand{\ind}[1]{\mathbf 1\left(#1\right)}
\newcommand{\ep}{E_{\oplus}}
\newcommand{\no}{\noindent}
\newcommand{\bc}{\begin{center}}
	\newcommand{\ec}{\end{center}}
\begin{document}
	
	\bibliographystyle{ims}
	
	\def\spacingset#1{\renewcommand{\baselinestretch}%
		{#1}\small\normalsize} \spacingset{1}

	\if1\blind
	{
		\title{\bf Concurrent Object Regression}
		\author{Satarupa Bhattacharjee and Hans-Georg M\"uller \footnote{Research supported by NSF grant DMS-1712864.}\\
			Department of Statistics, University of California, Davis\\
			(For the Alzheimer's Disease Neuroimaging Initiative\footnote{Data used in preparation of this article were obtained from the Alzheimer's Disease Neuroimaging Initiative (ADNI) database (adni.loni.usc.edu). As such, the investigators within the ADNI contributed to the design and implementation of ADNI and/or provided data but did not participate in analysis or writing of this report. A complete listing of ADNI investigators can be found at: \url{http://adni.loni.usc.edu/wp-content/uploads/how_to_apply/ADNI_Acknowledgement_List.pdf}})}
		\maketitle
	} \fi
	
	\if0\blind
	{
		\bigskip
		\bigskip
		\bigskip
		\begin{center}
			{\LARGE\bf Concurrent Object Regression}
		\end{center}
		\medskip
	} \fi
	
\begin{abstract}
Modern-day problems in statistics often face the challenge of exploring and analyzing complex non-Euclidean object data that do not conform to vector space structures or operations. Examples of such data objects include covariance matrices, graph Laplacians of networks, and univariate probability distribution functions. In the current contribution a new concurrent regression model is proposed to characterize the time-varying relation between an object in a general metric space (as a response) and a vector in $\reals^p$ (as a predictor), where concepts from Fr\'echet regression is employed. Concurrent regression has been a well-developed area of research for Euclidean predictors and responses, with many important applications for longitudinal studies and  functional data. However, there is no such model available so far for general object data as responses. We develop generalized versions of both global least squares regression and locally weighted least squares smoothing in the context of concurrent regression for responses which are situated in general metric spaces and propose estimators that can accommodate sparse and/or irregular designs. Consistency results are demonstrated for sample estimates of appropriate population targets along with the corresponding rates of convergence. The proposed models are illustrated with human mortality data and resting state functional Magnetic Resonance Imaging data (fMRI) as responses.
		\end{abstract}

\noindent%
{\it Keywords:} Metric-space valued data,  Fr\'echet regression, Random Objects, Varying coefficient model, fMRI,  Mortality.
\vfill

\newpage
\spacingset{1.5} 
		\section{INTRODUCTION}\label{INTRO}
Concurrent regression models are an important tool to explore the time-dynamic nature of the dependence between two variables. They are often used in regression problems, where the effect of the covariates on the response variable is affected by a  third variable, such as time or age. Specifically, the response at a particular time point is modeled as a function of the value of the covariate only at that specific time point. Concurrent regression models, also known as varying coefficient models, are natural  extensions of (generalized) linear models \citep{hast:tibs:93,clev:17}. Owing to their  interpretability and wide applicability in areas such as economics, finance, politics, epidemiology and the life sciences, there exists  a rich literature on these models that covers  a large range from simple linear models with  scalar responses to more complex longitudinal and functional data \citep{sent:nguy:11,fan:zhang:08,rams:05,sent:mllr:10,horv:koko:12,wang:16}, including regression problems where both responses and covariate(s) are of functional type.

However, as we enter the era of big data, more complex, often non-Euclidean, data are increasingly observed and this motivates the development of statistical models that are suitable for such complex data. In this paper, we introduce Concurrent Object Regression (CORE) models for very general  settings where one is interested in  the time-varying regression relation between a response  that takes values in a general metric object space without any linear structure and real-valued covariate(s). We note that no such models exist at this time and this is the first concurrent model for object data. 

For the special case  where the observations consist of a paired sample of square integrable random functions $(X(t),Y(t))$ that take values in $\reals$, the  linear functional concurrent model is well known \citep{rams:07} and can be written as 
\begin{equation} \label{eq:1.1}
\E{Y(t)|X(t)} = \mu_Y(t) + \beta(t)  \left(X(t) - \mu_X(t)\right),
\end{equation}	
where $\mu_Y(\cdot)$ and $\mu_X(\cdot)$ are respectively the mean functions of $X(\cdot)$ and $Y(\cdot)$ and $\beta(\cdot)$ is the smooth coefficient function. This can be thought of as a series of linear regressions for each time point that are connected and restricted by the assumed smoothness of the coefficient function $\beta$.

Several methods have been proposed to estimate the model components $\mu_X, \mu_Y$  and $\beta$, which are functional in nature,  including local polynomial kernel smoothing regression \citep{fan:99, fan:00, hoov:98, wu:chiang:00}, smoothing splines \citep{eubank:04, chiang:01} and function approximation of $\beta(\cdot)$ through basis expansion \citep{huang:02}. These methods were also adapted for spatial imaging \citep{zhang:11}, ridge regression \citep{manr:18} and other areas. 
Since the linear approach may not capture the true and possibly complex nature of  the relationship between $Y$ and $X$, the response and the covariate, a more general nonparametric model may be preferable,
\begin{equation}\label{eq:1.2}
\E{Y(t) | X(t)} = m(t,X(t)),
\end{equation}	
where the regression function $m$ is assumed to satisfy some basic smoothness properties.

Unlike a linear regression model, the parametric varying coefficient model in~\eqref{eq:1.1} or the nonparametric concurrent model in~\eqref{eq:1.2} involve the nested structure of the predictor space $(T,X(T))$ and allow the regression function (the coefficient functions in the parametric model) to vary systematically and smoothly in more than one direction. We aim to capture the nested predictor space structure and develop a concurrent regression model when the responses are random objects lying in a general metric space. To the best of our knowledge, such a model has not been studied before, even though for its Euclidean analogue various methods have been discussed over the years.

Estimation and inference in the nonparametric functional concurrent regression literature include methodologies such as spline smoothing \citep{maity:17}, Gaussian process regression \citep{shi:07, wang:shi:14}, and local kernel smoothing techniques \citep{verz:12} among others, with various subsequent developments \citep{wang:08, zhu:10}. Regression methods have also been considered more recently for manifold-valued responses in curved spaces \citep{zhu:09, cornea:17, yuan:13, davi:05}, owing to the growing realization that data from many disciplines have manifold structures,  including data generated in brain imaging, medical and molecular imaging, computational biology and computer vision.

The major objective of this paper is to overcome the limitation of Euclidean responses in the previous concurrent regression approaches, where it is always assumed that $Y(t)\in \reals$ or $Y(t) \in \reals^p$. The challenge that one faces in  extending concurrent regression beyond Euclidean responses  is that existing methodology relies in a fundamental way on the vector space structure of the responses,  which is no longer available, not even locally, when responses are situated in general separable metric spaces that cover large classes of possible response types. Technological advances have made it possible to record and efficiently store time courses of image, network, sensor or other complex data. Such ``object-oriented data'' \citep{marr:alon:14}  or ``random objects'' \citep{mull:16} can be viewed as random variables taking  values in a separable metric space that is devoid of a vector space structure and where only pairwise distances between the observed data are available. Such random object data, including distributional data in Wasserstein space~\cite{kato:21, chen:21}, covariance matrix objects~\citep{pete:mull:16}, data on the surface of the sphere~\citep{di:14}, and phylogenetic trees~\citep{bill:01}, have drawn the attention of the statisticians in recent times. 

As a motivating example for the proposed concurrent object regression (CORE), we consider fMRI brain-image scans for Alzheimer's patients over varying ages. It is important to note that, the space $\mathcal{C}$ of the functional connectivity network of fMRI signals, represented as correlation matrices between the different nodes of the brain is not linear and there is no concept of direction.

However, the connectivity correlation matrices can be perceived as random objects in a metric space, endowed with a suitable metric. For example, one might be interested to see if certain measures indexing the advancement of the disease, such as the total cognitive score, are associated with the connectivity matrices. It is known that a higher total cognitive score may be linked with a more serious cognitive deficit and a higher age.  At the same time, the functional connectivity itself is expected to deteriorate with increasing age as the severity of the condition intensifies over time. Of interest is then to ascertain the dependence of the functional connectivity correlation matrices of the Alzheimer's subjects on time (age) and some index of the overall health for the subjects, that also varies over time.

The space of positive semi-definite matrices is a Riemannian manifold which can be flattened locally and analyzed using linear results, however the Riemannian structure of the space depends heavily on the metric. Our approach of treating it as a metric space is more general, in the sense that it works for many metrics in the space such as the Frobenius metric, the log-Euclidean metric \citep{arsi:07}, the Procrustes metric \citep{pigo:14, zhou:16}, the power metric \citep{dryd:09, dryd:10}, the affine-invariant Riemannian metric \citep{penn:06, moak:05}, the Cholesky metric \citep{lin:19} among others. As such we do not have to evoke the Riemannian geometry of the space. However, a possible challenge inherent in Fr\'echet regression to ascertain the existence and uniqueness of the Fr\'echet means may be encountered. 
Other examples of such general metric space objects include time-varying age-at-death densities resulting from demographic data, where the interest is in quantifying the dynamic regression relationship between the densities and time-dependent some economic index such as GDP per capita,
or time-varying network data, for example  internet traffic networks where one has concurrent covariates.

The natural notion of a mean for  random elements of a metric space is the  Fr\'echet mean \citep{frechet:48}. It is a direct generalization of the standard mean, and is defined as the element of the metric space for which the expected squared distance to all other elements is minimized. It can encompass different types of means commonly used, such as the expectation, the median, or the geometric mean, and extends to non-Euclidean spaces, thus allowing for profound applications of probability theory and statistics exploiting the geometry in such spaces \citep{scho:20,sayan:14, yang:20, zhan:21}.
\cite{pete:mull:19} extended the concept of Fr\'echet mean to the notion of a conditional Fr\'echet mean, implemented as Fr\'echet regression, where one has samples of data $(X_i,Y_i)$, with the $Y_i$ being random objects and the $X_i$ are Euclidean predictors. This is an extension of ordinary regression to metric space valued responses.

Even though Fr\'echet regression  \citep{pete:mull:19} can incorporate a random time variable as one of the Euclidean covariates, the concurrent regression relationship between paired stochastic processes of real covariates and an object response as a function of time has not been explored yet. This is an important problem of its own accord and highly relevant in various data applications such as brain imaging for which we provide an example in Section~\ref{DATA:fMRI}. It is of interest to observe that concurrent object regression is not the same as Fr\'echet regression, just as varying coefficient models in~\eqref{eq:1.1} and~\eqref{eq:1.2} are different from linear regression models when the response is Euclidean.

In Section~\ref{NPM}, we introduce the concurrent object regression (CORE) model for time-varying object responses and time-varying real covariate(s). We separately discuss two situations -- one where we assume a ``linear'' dependence of the predictor and response at any given time point and a second scenario in which we assume a nonparametric model in Sections~\ref{NPM} and \ref{PGM} respectively. Our motivating application examples deal with samples of probability distributions, data lying on unit sphere in $\reals^3,$ and correlation matrices, which are illustrated with simulations and real data from neuroimaging and demography, with details in Sections~\ref{SIM} and \ref{DATA}, respectively. We conclude with a brief discussion about our methods in Section~\ref{CONCL} . 

\section{Data and model}
\label{model}
\no Throughout, we consider a totally bounded, hence separable, metric space $(\Omega,d)$, where the response is situated. This is coupled with a $p$-dimensional real valued stochastic process $X(\cdot)$ as a predictor. The  $\Omega$-valued random object  response $Y$ depends on both $X$ and a ``time''-variable $t \in \T$, where $\T$ is a closed and bounded interval on the real line. In other words, $\left(X(t),Y(t)\right) : t \in \T$ are two stochastic processes that, for each given $t$, take values $\reals^p$ and $\Omega$ respectively.

A random time  $T$ is selected from some distribution  $f_T$  on $\T$, at which $X$ is observed.
Note that $X(T)$ is itself a random variable and has a probability distribution on $\reals^p$. The  joint distribution of $(X(T),T)$ is well defined in case $X(T)$ and $T$ are independently distributed. For the sake of generality, we consider the joint distribution of  $(X(T),T)$ and, with a slight abuse of notation, denote the joint distribution by $F_{(X,T)}$, which is a probability distribution on $\reals^p \times \reals$. We further assume that  $Y \sim F_Y$ where $F_Y$ is a distribution on $(\Omega,d)$. The conditional distributions of $Y(T)|(X(T),T)$ and $(X(T),T)|Y(T)$ are denoted by $F_{Y|(X,T)}$ and $F_{(X,T)|Y}$ respectively, assuming they exist.
We define the concurrent object regression (CORE) model as follows
\begin{align}\label{eq:CORE}
m_\oplus(x,t) &:= \ep\left({Y(t)| X(T)=x, T = t}\right) := \underset{\omega \in \Omega}{\text{argmin }}M_\oplus(\omega,x,t),\nonumber \\ 
&M_\oplus(\omega,x,t) = \E{d^2(Y(t),\omega )| X(T)= x, T=t},
\end{align}	
and refer to the objective function $M_\oplus(\cdot,x,t)$ in \eqref{eq:CORE}  as the  conditional Fr\'echet function.

In many scenarios one does  not fully observe the trajectories of responses $Y(t)$ and  covariates $X(t)$.  We  consider a general situation, where each subject is measured at random time points, possibly according to a sparse design, with  observed data  of the form $(T_{il}, X_i(T_{il}),Y_i(T_{il}));\hspace*{.1cm} l=1,\dots, n_i;\, i =1,\dots,n, $ i.e., for  the $i^{\text{th}}$ subject one has observations of the response $Y(\cdot)$ and predictor  $X(\cdot)$ at time points $T_{il}$ that may vary from  subject to subject. We denote the observed data by  $(T_{il}, X_{il},Y_{il});\, l=1,\dots, n_i; \, i =1,\dots,n$.
The number of observations $n_i$ made for the $i^{\text{th}}$ subject is a r.v. with $n_i \overset{i.i.d.}{\sim}N,$
where $N > 0$ is a positive discrete random variable, with $\E{N}<\infty$ and $P(N>1) >0.$
The observation times and measurements are assumed to be independent of the number of measurements, i.e., for any subset $J_i \subseteq \{1,\dots,n_i\}$ and for all $i = 1,\dots, n,$ $\left(\{T_{il} : l \in J_i\},\{X_{il} : l \in J_i\},\{Y_{il} : l \in J_i\}\right)$ is independent of $n_i.$

\section{Nonparametric concurrent object regression}
\label{NPM}
In this section, we develop a nonparametric estimation strategy for the target CORE model~\eqref{eq:CORE}, assuming that the dependence of the response $Y(T)$ on the predictors $X(T)$ and $T,$ for any randomly chosen $T \in \T$ are local, in both directions. For ease of presentation, we provide details for the case of a scalar predictor. For the remainder of this section we will assume that $X(t) \in \reals^p$, where $p = 1$ for all $t \in \T$ , that is the dimension of the predictor space $(T,X(T))$, for any random time point $T$ is $p+1 = 2.$ This allows for simpler notation and implementation. At the cost of much more involved notation, the theory can be extended to cover cases where $p > 1.$

We aim to express the CORE function $m_\oplus(x,t)$ in \eqref{eq:CORE} as a weighted Fr\'echet mean, where the weight function varies with the values $(x,t)$ of the predictors. 
The intuition behind these approaches derives from the special case of Euclidean responses.

As an illustrating motivation, let us first consider here the special case of time-varying Euclidean responses. The space is equipped with the metric $d(a,b) = d_E(a,b) = |a-b|$ for all $a,b \in \reals.$ The minimizer of $M_\oplus$ in \eqref{eq:CORE} exists, is unique and coincides with the conditional expectation, and we write  
\begin{align}\label{eq:SPCL:MODEL}
m_\oplus(x,t) = \ep\left({Y(t)| X(T)=x, T = t}\right) &= \E{Y(t)| X(T)=x, T = t} := m(x,t).
\end{align}
%

Local kernel-based nonparametric regression approaches to estimate a smooth regression function for Euclidean responses have been well investigated due to their versatility and flexibility. If we assume a nonparametric relationship of the response $Y$ with the predictors $T$ and $X(T)$, the local linear estimate of the function $m$  in \eqref{eq:SPCL:MODEL} at any given point $(x,t)$
is given by $\hat{m}(x,t) := \hat{\beta}_0.\text{ Here}$
\begin{align}\label{eq:SPCL:EST}
(\hat{\beta}_0, \hat{\beta}_1, \hat{\beta}_2) = \underset{\beta_0, \beta_1,\beta_2}{\text{argmin  }} \frac{1}{n} \displaystyle \sum_{i=1}^{n} \Bigl( \frac{1}{n_i}\displaystyle \sum_{j=1}^{n_i} & \left(Y_{il}-\beta_0 - \beta_1(X_{il}-x) - \beta_2(T_{il}-t)\right)^2 \Bigr) \nonumber \\ 
& \hspace*{1cm}\times K_{h_1,h_2}(X_{il}-x, T_{il}-t).
\end{align}
$K$ is a bivariate kernel function, which corresponds to a bivariate density function, and $h_1, h_2$ are the bandwidth parameters such that $K_{h_1,h_2}(x_1,x_2)= (h_1h_2)^{-1} K(x_1/h_1, x_2/h_2)$.   We can view the above estimator in~\eqref{eq:SPCL:EST} as an M-estimator of the alternative population target
\begin{align}\label{eq:SPCL:EST2}
(\beta_0^*, \beta_1^*,  \beta_2^*) = \underset{\beta_0, \beta_1,\beta_2}{\text{argmin }} \displaystyle \int & \Bigl[K_{h_1,h_2}(z-x, s-t) \nonumber \\
\times \Bigl(\int ydF_{Y|X,T}&(y,z,s) - \beta_0 - \beta_1(z-x) - \beta_2 (s-t)\Bigr)^2 \Bigr]dF_{(X,T)}(z,s).
\end{align}

Defining
\begin{align}\label{eq:SPCL:EST3}
\mu_{jk} &:= \E{K_{h_1,h_2}(X-x, T-t)(X-x)^j(T-t)^k},\\
r_{jk} &:= \E{K_{h_1,h_2}(X-x, T-t)(X-x)^j(T-t)^kY},\quad
\Sigma=
\begin{bmatrix}
\mu_{00} & \mu_{10} & \mu_{01} \nonumber\\
\mu_{10} & \mu_{20} & \mu_{11} \nonumber \\
\mu_{01} & \mu_{11} & \mu_{02}
\end{bmatrix}, \nonumber
\end{align}
the solution of the minimization problem in \eqref{eq:SPCL:EST2} is 
\begin{align}\label{eq:SPCL:EST4}
\tilde{l}(x,t) = \beta_0^* = \begin{bmatrix} 1, & 0, &  0 \end{bmatrix}\Sigma^{-1} \begin{bmatrix} r_{00}, & r_{10}, & r_{01} \end{bmatrix} = \E{s^L(X,x,T,t,h_1,h_2)Y},
\end{align}
with weight function $s^L$ given by 
\begin{align}\label{eq:SPCL:EST5a}
s^L(X,x,T,t,h_1,h_2) &= K_{h_1,h_2}(X-x, T-t) \left[ \nu_1 + \nu_2 (X-x) +\nu_3 (T-t) \right],\\
[\nu_1, \nu_2, \nu_3] & = \frac{1}{ \sigma^2_0}\left[\mu_{20}\mu_{02}- \mu_{11}^2, \quad \mu_{01}\mu_{11} -  \mu_{02}\mu_{10}, \quad \mu_{10}\mu_{11} - \mu_{20}\mu_{01}\right],\nonumber\\
\sigma^2_0 = |\Sigma| &=  \left(\mu_{00} \mu_{20}\mu_{02} - \mu_{00}\mu_{11}^2 - \mu_{10}^2\mu_{02} - \mu_{01}^2\mu_{20} + 2\mu_{01}\mu_{10}\mu_{11}\right), \nonumber
\end{align}
where $|A|$ denotes the determinant of any square matrix $A$. Observing that\\  $\displaystyle \int s^L(z,x,s,t,h_1,h_2) dF_{Y,X,T}(y,z,s) =1$, it follows that $\tilde{l}(x,t)$ in \eqref{eq:SPCL:EST4} corresponds to a localized Fr\'echet mean w.r.t. the Euclidean metric $d_E(a,b) := |a-b|$, 
\begin{align}\label{eq:SPCL:EST5b}
\tilde{l}(x,t) &=\underset{y \in  \reals }{\text{argmin  }}\E{s^L(X,x,T,t,h_1,h_2)d_E^2(Y,y)}.
\end{align}
The minimizer $\tilde{l}(x,t) $ can be viewed as a smoothed version of the true regression function, and can therefore be treated as an intermediate target.

This locally weighted Fr\'echet mean in \eqref{eq:SPCL:EST5b} can be readily generalized to the case of an $\Omega$-valued stochastic process $Y(t): t \in \T$, where $\Omega$ denotes a separable metric space, by retaining the same weights and replacing the Euclidean metric $d_E$ by $d$. This leads to the  intermediate populaton-level quantity, as is given below by model \eqref{eq:NPM:MODEL}.

In the context of nonparametric CORE, we thus define an intermediate function $\tilde{l}_\oplus(x,t)$  as a localized weighted Fr\'echet mean at the chosen points $(x,t),$ where
\begin{align}\label{eq:NPM:MODEL}
\tilde{l}_\oplus(x,t) := \underset{\omega \in \Omega}{\text{argmin }} \tilde{L}_\oplus(\omega,x,t),
\quad \text{where } \tilde{L}_\oplus(\omega,x,t) := \E{s^L(X,x,T,t,h_1,h_2)d^2(Y,\omega)}. 
\end{align}
Here  $s^L$ is as in \eqref{eq:SPCL:EST5a} and captures the local dependence of the response on the predictor.  
Minimizing the intermediate objective $\tilde{L}_\oplus(\omega,\cdot,\cdot)$ in \eqref{eq:NPM:MODEL} turns out to be  approximately the same as minimizing the final objective $M_\oplus(\omega)$ in \eqref{eq:CORE}. Finally, we propose an estimator for the intermediate target based on the plug-in estimates of the auxiliary parameters (see \eqref{eq:SPCL:EST3}) by their corresponding empirical estimates as follows.
Define
\begin{align}\label{eq:NPM:EST1}
\hat{\mu}_{jk} := \frac{1}{n}\displaystyle \sum_{i=1}^{n} \displaystyle \frac{1}{n_i} \sum_{l=1}^{n_i} K_{h_1,h_2}(X_{il}-x, T_{il}-t)(X_{il}-x)^j(T_{il}-t)^k,
\end{align}
\begin{align} 
\hat{\Sigma} & =
\begin{bmatrix}
\hat{\mu}_{00} & \hat{\mu}_{10} & \hat{\mu}_{01} \label{eq:3.99} \\
\hat{\mu}_{10} & \hat{\mu}_{20} & \hat{\mu}_{11} \\
\hat{\mu}_{01} & \hat{\mu}_{11} & \hat{\mu}_{02}
\end{bmatrix}, 
\quad 		\hat{\sigma}^2_0 = |\hat{\Sigma}|,	\quad 		N= \sum_{i=1}^{n}n_i ,\\
[\hat{\nu}_1, \hat{\nu}_2, \hat{\nu}_3]  &= \frac{1}{ \hat{\sigma}^2_0}\left[\hat{\mu}_{20}\hat{\mu}_{02}- \hat{\mu}_{11}^2, \quad\hat{\mu}_{01}\hat{\mu}_{11} -  \hat{\mu}_{02}\hat{\mu}_{10}, \quad \hat{\mu}_{10}\hat{\mu}_{11} - \hat{\mu}_{20}\hat{\mu}_{01}\right] \label{eq:3.100},\\
\hat{s}^L_{il}  (x,t,h_1,h_2) &= K_{h_1,h_2}(X_{il}-x, T_{il}-t) \left[ \hat{\nu}_1 + \hat{\nu}_2 (X_{il}-x) +\hat{\nu}_3 (T_{il}-t) \right] \label{eq:3.101}.
\end{align}

Plugging in the above empirical estimates we obtain the local Fr\'echet regression estimate 
\begin{align}\label{eq:NPM:EST2}
\hat{l}_\oplus(x,t) := \underset{\omega \in \Omega}{\text{argmin }} \hat{L}_\oplus(\omega,x,t), \quad   \hat{L}_\oplus(\omega,x,t) := \frac{1}{n}\displaystyle \sum_{i=1}^{n} \displaystyle \frac{1}{n_i} \sum_{l=1}^{n_i} \hat{s}^L_{il} (x,t,h_1,h_2)d^2(Y_{il},\omega).
\end{align}
Under suitable assumptions the bias introduced by changing the true target in \eqref{eq:CORE} to the intermediate target in \eqref{eq:NPM:MODEL}, given by $d(m_\oplus(\cdot,\cdot),\tilde{l}_\oplus(\cdot,\cdot))$, converges  to $0$ as the bandwidths $h_1,h_2 \rightarrow 0$. In addition  the stochastic term
$d(\hat{l}_\oplus(\cdot,\cdot),\tilde{l}_\oplus(\cdot,\cdot))$, converges to $0$ in probability, 
which then yields the  convergence of the proposed plug-in estimator in \eqref{eq:NPM:EST2} to the true target model in \eqref{eq:CORE}. To establish  this, we  
require the following assumptions, which are similar to  assumptions in  \cite{pete:mull:19}.

	\begin{enumerate}[label = (A\arabic*)]
	\item \label{A1}The kernel $K$ is symmetric around zero, with $|K^\gamma_{jk}| = | \int K^\gamma(u,v)u^kv^j du dv \displaystyle| < \infty \text{ for all } j,k =0,1,\dots, 6$ and $\gamma = 0,1,2.$ Also there is a common bandwidth parameter $h>0,\, h \rightarrow 0, \, nh \rightarrow \infty$  as $n \rightarrow \infty$, such that $h_1, h_2 \sim h.$ 
	
	\item \label{A2} The marginal density $f_{(X,T)}(x,t)$  and the conditional density $f_{(X,T)|Y}(x,t,y)$ exist and  are twice continuously differentiable with uniformly bounded derivatives as a bivariate function of $(x,t)$, the latter for all $y,$ for any given realization of $T=t,$ $X(T) = x,$ and $Y(T) =y.$
	
	\item \label{A3}The Fr\'echet means $m_\oplus(x,t), \, \tilde{l}_\oplus(x,t), \, \hat{l}_\oplus(x,t)$ exist and are unique for any given points $(x,t),$ and for any $\epsilon>0,$ 
	\[
	\underset{d(\omega,m_\oplus(x,t))>\epsilon}{\inf}\ M_\oplus(\omega,x,t) > M_\oplus(m_\oplus(x,t),x,t).
	\]
	
	\item \label{A4} For any $\epsilon>0$,
	\begin{align*}\underset{n}{\liminf}\underset{d(\omega,m_\oplus(x,t))>\epsilon}{\inf} \left(M_\oplus(\omega,x,t)- M_\oplus(m_\oplus(x,t),x,t)\right) >0,\\
	\underset{d(\omega,\tilde{l}_\oplus(x,t))>\epsilon}{\inf} \left(\tilde{L}_\oplus(\omega,x,t)- \tilde{L}_\oplus(\tilde{l}_\oplus(x,t),x,t)\right) >0.
	\end{align*}
	\item \label{A5} There exist constants $\eta_1 >0, \, C_1 >0$, $\text{with } d(\omega,m_\oplus(x,t)) < \eta_1$ such that
	\[
	M_\oplus(\omega,x,t) - M_\oplus(m_\oplus(x,t),x,t) \geq C_1d(\omega,m_\oplus(x,t))^2.
	\]
	
	\item  \label{A6} There exist $\eta_2 >0,  \, C_2 >0$, $\text{with } d(\omega,\tilde{l}_\oplus(x,t)) < \eta_2$ such that
	\[
	\underset{n}{\text{lim inf }} \left[ \tilde{L}_\oplus(\omega,x,t)  - \tilde{L}_\oplus(\tilde{l}_\oplus(x,t),x,t) \right]  \geq C_1d(\omega,\tilde{l}_\oplus(x,t))^2.
	\]
	
	\item \label{A7} Denoting the ball of radius $\delta$ centered at $m_\oplus(x,t)$ by $\mathcal{B}_\delta(m_\oplus(x,t)) \subset \Omega$ and  its covering number using balls of size $\epsilon$ as $N(\epsilon, \mathcal{B}_\delta(m_\oplus(x,t)),d)$,
	\[
	\int_{0}^{1} \sqrt{1+ \log N(\delta\epsilon, \mathcal{B}_\delta(m_\oplus(x,t)),d)} d\epsilon = O(1) \text{ as} \delta \rightarrow 0.
	\]
	\setcounter{cntr}{\value{enumi}}
\end{enumerate}
Assumptions \ref{A1}-\ref{A2} are necessary to show that the intermediate objective function $\tilde{L}_\oplus$ is a smoothed version of the true objective function $M_\oplus$. These are assumptions akin to the ones made in \cite{pete:mull:19} and are common in the nonparametric regression literature. Assumption  \ref{A3} is regarding the existence and uniqueness of the Fr\'echet means. The existence of the Fr\'echet means depends on the nature of the space, as well as the metric considered. For example, in case of Euclidean responses the Fr\'echet means coincide with the usual means for random vectors with finite second moments. In case of Riemannian manifolds the existence, uniqueness, and the convexity of the center of mass is guaranteed \citep{afsa:11,penn:18}. In a space with a negative or zero curvature, or in a Hadamard space  unique Fr\'echet means are also shown to exist \citep{bhat:03,bhat:05,patr:15,stur:03,kloe:10}.

Corresponding to each space equipped with a suitable metric, the computational challenge to find the Fr\'echet means could be different. In many cases, the key idea to compute the weighted Fr\'echet means reduces to solving a constrained quasi-quadratic optimization problem and projecting back into the solution space. For a wide class of objects such as distributions, positive semi-definite matrices, networks, and Riemannian manifolds among others, the unique solution can be found analytically (see Propositions $1$ and $2$ in the Appendix of \cite{pete:mull:19}), and is  not computationally difficult to obtain.  

Assumptions  \ref{A3}-\ref{A4} are commonly invoked  to establish consistency of an M-estimator such as $\hat{m}_\oplus(x,t)$, where one uses  the weak convergence of the empirical process $\hat{L}_\oplus$ to $\tilde{L}_\oplus$, which in turn converges smoothly to $M_\oplus$. Assumptions  \ref{A5}-\ref{A6} relate to  the curvature of the objective function and are needed to control the behavior of $\tilde{L}_\oplus- M_\oplus$ and $\hat{L}_\oplus - \tilde{L}_\oplus$ respectively, near the minimum.
Assumption \ref{A7} gives a bound on the covering number of the object metric space and is satisfied by the common examples of random objects such as distributions, covariance matrices, networks and so on.

In the concurrent regression framework, an important feature of the predictor space is as follows: when $X(t) \in \reals$, for any given $t \in \T$, the set  $\{\left(t,X(t)\right): t \in \T\}$ is a one-dimensional manifold $\mathcal{M}$ embedded in the ambient space $\reals^{2}$. This is an inherent property of the whole predictor space, irrespective of the dimension (possibly $p>1$) or the structure of $X(t).$ In our case, this reduces the effective dimension of the predictor space from two to one, i.e., the observed data  $(T_{il},X_{il})$  take values on this $1$-dimensional manifold embedded in $\reals^{2}$.  Note that this does not contradict our assumptions regarding the existence of the joint densities, $f_{X,T}$ (Section~\ref{model}).

Denoting  by $\mathcal{B}^{(k)}_r(a) \subset \reals^k$ a ball in $\reals^k$ with center $a \in \reals^k$ and radius $r>0$, for any $t\in \T$ and $x = X(t),$ the center of the ball $\mathcal{B}^{(2)}_h(x,t)$ is situated on the manifold $\mathcal{M}$. The following assumptions ensure that  the predictors are  dense on  $\mathcal{M}$. 
	\begin{enumerate}[label = (A\arabic*)]
	\setcounter{enumi}{\value{cntr}}
	\item 	 \label{A8}  Assume that for any $t \in \mathcal{T}$, the number of sample points outside balls $\mathcal{B}^{2}_h(x,t)$  is bounded and the he following asymptotic irrelevance condition hold.\\
	$\E{{K^\gamma \left(\frac{X-x}{h}, \frac{T-t}{h}\right) \ind{ (X(T),T) \notin  \mathcal{B}^{2}_h(x,t)} (X-x)^j (T-t)^k}} = O(h^{1+j+k}),$ for $\gamma = 0,1,2$, where  $\ind{z \notin A}$ denotes the indicator function for an element $z$ not belonging to a set $A$. 
	\item 	\label{A9} The density $f_T(\cdot)$ of $T$ is bounded away from $0$ the expected number of sample points falling inside a ball $\mathcal{B}^{2}_h(x,t)$ of radius $h$ centered at $(x,t)$ for any $t\in \T$ and $x = x(t)\in \reals$ is proportional to $h,$ i.e., for some  constant $c_{t}>0,$  
	$
	P((X_{il},T_{il}) \in \mathcal{B}^{2}_h(x,t)) = c_t h.
	$
\end{enumerate}
Assumptions akin to \ref{A8} are encountered in local polynomial regression \citep{bick:07, fan:96} to facilitate enough sample points to ensure estimation accuracy of the proposed methods. In particular it holds for a kernel $K$ with exponential tails.
Assumption \ref{A9} concerns the existence of a local ``chart'' or homeomorphism from a neighborhood in the predictor space $\reals^{2}$ to a ball in $\reals$, along the curve $\{(t,X(t)): t\in \T\}.$ 
This manifold structure of the predictor space is crucial to show that the rate of convergence corresponds to that for  $1-$dimensional  predictors even though the predictor dimension is $\reals^{2}$. For a generalization of the nonparametric CORE, where $X(t) \in \reals^p,$ for $p>1$ and for any $t\in \T,$ this observation still holds true and can be used to reduce the effective predictor dimension by one. 

The following propositions demonstrate that, while we have a \emph{two dimensional} predictor $(X,T)$, the rate of convergence of the proposed estimator still corresponds to the known optimal rate for a nonparametric regression with a \emph{one-dimensional} predictor. A similarly reduced rate of convergence is obtained for a  $p$-dimensional Euclidean predictor $X$. The reason that the effective predictor dimension is $p$ and not $(p+1)$ is the manifold constraint . Proposition~\ref{Proposition1} shows that the bias introduced by changing the concurrent object regression model $m_\oplus(\cdot,\cdot)$ in~\eqref{eq:CORE} to the intermediate nonparametric version of the CORE model $\tilde{l}_\oplus(\cdot,\cdot)$ in~\eqref{eq:NPM:MODEL} is negligible, as the bandwidth parameter for the bivariate kernel is chosen sufficiently small. Proposition~\ref{Proposition2} is about the stochastic convergence of the nonparametric CORE estimator $\hat{l}_\oplus(\cdot,\cdot)$ in~\eqref{eq:NPM:EST2}.  
\begin{Proposition}\label{Proposition1}
	Under the regularity assumptions \ref{A1}-\ref{A6}, for any given points $t\in \T$ and $x = X(t) \in \reals,$ 
	\begin{equation*}
	d(m_\oplus(x,t), \tilde{l}_\oplus(x,t)) = O\left(h^2\right),\ \text{ as }  h = h_n \rightarrow 0 \text{ and }nh \rightarrow \infty \text{ where } h \text{ is as in } \ref{A1}.
	\end{equation*}
\end{Proposition} 

\begin{Proposition}\label{Proposition2}
	Under the  regularity assumptions  \ref{A1}-\ref{A9}, for any given points $t\in \T$ and $x = X(t) \in \reals,$ 
	\begin{equation*}
	d(\hat{l}_\oplus(x,t), \tilde{l}_\oplus(x,t)) = O_p ((nh)^{-1/2}),\ \text{ as }  h = h_n \rightarrow 0 \text{ and }nh \rightarrow \infty \text{ where } h \text{ is as in }  \ref{A1}.
	\end{equation*}
\end{Proposition}
In general, the rate of convergence is dictated by the local geometry of the object space near the minimum as quantified in \ref{A4}-\ref{A6}. The derivations for the pointwise results 
rely on tools from  the theory of M-estimation.
Combining these two results leads to the overall rate of convergence of the nonparametric CORE estimator.
\begin{Theorem} \label{thm1}
	Under the regularity conditions \ref{A1}-\ref{A9}), 
	\begin{equation*}
	d(m_\oplus(x,t),\hat{l}_\oplus(x,t)) = O_p\left( h^2 + (nh)^{-\frac{1}{2}}\right),  \text{ as } h = h_n \rightarrow 0  \text{ and } nh \rightarrow \infty.
	\end{equation*}
\end{Theorem}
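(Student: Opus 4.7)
The plan is to derive Theorem \ref{thm1} as an essentially immediate consequence of Propositions \ref{Proposition1} and \ref{Proposition2}, by inserting the intermediate target $\tilde{l}_\oplus(x_0,t_0)$ as a bridge between the population target $m_\oplus(x_0,t_0)$ and its sample estimate $\hat{l}_\oplus(x_0,t_0)$, and invoking the triangle inequality of the metric $d$.

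First I would note that since $d$ is a metric on $\Omega$, one has the deterministic bound
\begin{equation*}
d\bigl(m_\oplus(x_0,t_0),\hat{l}_\oplus(x_0,t_0)\bigr) \;\leq\; d\bigl(m_\oplus(x_0,t_0),\tilde{l}_\oplus(x_0,t_0)\bigr) \;+\; d\bigl(\tilde{l}_\oplus(x_0,t_0),\hat{l}_\oplus(x_0,t_0)\bigr).
\end{equation*}
The first summand on the right is a purely analytic bias term, controlled by Proposition \ref{Proposition1}, which under assumptions (A\ref{A1})--(A\ref{A6}) gives $d(m_\oplus(x_0,t_0),\tilde{l}_\oplus(x_0,t_0)) = O(h^2)$ as $h \to 0$. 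The second summand is a stochastic fluctuation term, controlled by Proposition \ref{Proposition2}, which under the full set of assumptions (A\ref{A1})--(A\ref{A9}) gives $d(\tilde{l}_\oplus(x_0,t_0),\hat{l}_\oplus(x_0,t_0)) = O_p((nh)^{-1/2})$.

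Next I would verify that the bandwidth regime required by both propositions is the same as in the theorem statement, namely $h = h_n \to 0$ and $nh \to \infty$, with $h_1,h_2 = O(h)$ and $h = O(\max(h_1,h_2))$ from assumption (A\ref{A1}). Since this regime is imposed in both propositions, the two rates hold simultaneously. Adding a term of deterministic order $O(h^2)$ to a term of stochastic order $O_p((nh)^{-1/2})$ yields $O_p(h^2 + (nh)^{-1/2})$, which is exactly the stated rate.

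In terms of obstacles, there is nothing genuinely new to do here: the two component bounds already encode the substantive work, and the only subtlety is ensuring consistent use of the bandwidth $h$ across the two propositions and the theorem statement. Any residual care is notational—replacing the $(x,t)$ in the theorem statement by $(x_0,t_0)$ to match the propositions, and making sure the $O$ and $O_p$ rates combine correctly under addition, which follows from the standard fact that a $O(a_n)$ sequence is $O_p(a_n)$ and sums of $O_p$ terms are $O_p$ of the maximum order.
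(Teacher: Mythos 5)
Your proposal is correct and follows exactly the route taken in the paper, which states that Theorem~\ref{thm1} follows by combining Propositions~\ref{Proposition1} and \ref{Proposition2} via the triangle inequality for the metric $d$, using $\tilde{l}_\oplus(x_0,t_0)$ as the intermediate target. Your added remarks on the common bandwidth regime from (A\ref{A1}) and on combining $O(h^2)$ with $O_p((nh)^{-1/2})$ are the same implicit bookkeeping the paper relies on.
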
 
Under the Assumptions  \ref{A1}-\ref{A9}, if we consider a sequence of bandwidths of the form $h=n^{-\gamma}$, the optimal choice for $\gamma$ that minimizes the mean square error is obtained for $\gamma^{\ast} = 1/5$ and the resulting rate of convergence is  $d(m_\oplus(x,t),\hat{l}_\oplus(x,t)) = O_p\left(n^{-2/5} \right) $.

The nonparametric CORE model and assumptions considered so far are developed for the case $X(\cdot) \in \reals$. For instance, the kernel is assumed to be bivariate, and the weights $s^L$ in~\eqref{eq:SPCL:EST5a} and their estimates in~\eqref{eq:3.101} accommodate a real-valued predictor process. The theory can be generalized for $p>1,$ however, there are practical limitations, including the curse of dimensionality, multiple bandwidth choices, and one has to account for correlation and differences in scale between the components of $X(\cdot).$
Under more stringent modeling assumptions some of these issues can be avoided by a modeling approach that extends the notion of linear relationship to the $X$ direction and this will be discussed next section.

\section{Partially global concurrent object regression}
\label{PGM}
In the Euclidean case, a  well-established  alternative to  nonparametric concurrent regression is a global/linear varying coefficient model, where for each fixed time a linear regression of $Y(\cdot)$ on $X(\cdot)$ is assumed.  This linear regression relation can be described by a global weight function applied to the covariate $X(\cdot)$. This can then be adapted for  the case where responses are random objects, by constructing conditional Fr\'echet means with this same weight function  \citep{pete:mull:19}, all while assuming nonlinear  dependence between $Y(T)$ and $T$. 
As before, we first study  the special case of a Euclidean response and then express the CORE function in \eqref{eq:CORE} as an intermediate target expressed as a weighted Fr\'echet mean, the weights being globally linear in the $X$-direction and locally linear in the $T$- direction. The partially linear dependence in the $X-$ direction imposes a more structural model than the general conditional Fr\'echet mean defined in~\eqref{model}.
This leads to the proposed partially global concurrent object regression model, with the Euclidean predictor $X(\cdot) \in\reals^p,$ ($p\geq 1$) and object response $Y(\cdot) \in \Omega,$ at the given points $T=t$ and $X(T)= x$ as
\begin{align}\label{eq:PGM:MODEL}
\tilde{g}_\oplus(x,t) = \underset{\omega \in \Omega}{\text{argmin}} \, \tilde{G}_\oplus(\omega,x,t), \quad 
\text{where} \, \tilde{G}_\oplus(\omega,x,t) := \E{s^G(X,x,T,t,h)d^2(Y,\omega)}.
\end{align}

Here the weight function $s^G$ is given by
\begin{align}
s^G(z,x,s,t,h) = s_1(z,x,s,t,h) + s_2(s,t,h), \label{eq:PGM:EST1}
\end{align} 
with $s_1(z,x,s,t,h) := K_h(s-t)\left[(z-\mu_X(t))^\intercal  \Sigma^{-1}_{20} (x-\mu_X(t))\right]$, where $\mu_X(t) = \E{X(t)} =$ $\E{X|T=t},$
and $s_2(s,t,h) := \frac{1}{\sigma_0^2} K_h(s-t)\left(\mu_{02} - (s-t)\mu_{01} \right).$ 
For the explicit derivation of the weight function $s^G$, motivated from the special case of time-varying Euclidean responses.
Here $s^G$ encapsulates  the dependence of the response on the predictors, where  the dependence is global in the direction of the covariate $X$,  while it is local in the $T$ direction, which is reflected in the   two parts $s^G(z,x,s,t,h) =s_1(z,x,s,t,h) + s_2(s,t,h)$.  

Observe that $s_1(\cdot,\mu_X(t),\cdot,\cdot,\cdot) = 0$,  that is, the regression model reduces to a nonparametric regression model with the only predictor $T$ when $x= \mu_X(t)$. We see that
$
\int s_1(z,x,s,t,h) \ dF_{(X,T)}(z,s) =0.
$
Also, under mild assumptions (Assumption \ref{B1}  in the Supplementary Material Section~\ref{Appendix:A3}) on the kernel $K_h(\cdot)$ and the smoothness of marginal and conditional densities $f_{(X,T)}$ and $f_{(X,T)|Y}$ we can show that
$
\int s_2(s,t,h) \ dF_{X,T|Y}(z,s,y)= \frac{dF_{X,T|Y}(z,s,y)}{dF_{X,T}(z,s)} +O\left(h^2\right).
$
Thus we may view $\tilde{G}_\oplus$ as a smoothed version of $M_\oplus$ as the bandwidth parameter $h=h_n \rightarrow 0$ 

Finally, we propose a plug-in estimate for the partially-global regression model $g_\oplus$ in \eqref{eq:PGM:MODEL}. For this purpose we define the preliminary estimates of the auxiliary parameters as follows
\begin{align}
\hat{\mu}_{0j} &:= \frac{1}{n}\displaystyle \sum_{i=1}^{n}  \frac{1}{n_i}\displaystyle \sum_{l=1}^{n_i}K_h(T_{il}-t)(T_{il}-t)^j, \label{eq:PGM:EST3}\\
\hat{\Sigma}_{2j} &:= \frac{1}{n}\displaystyle \sum_{i=1}^{n} \frac{1}{n_i} \displaystyle \sum_{l=1}^{n_i} K_h(T_{il}-t)(T_{il}-t)^j(X_{il}-\hat{\mu}_X(t)) (X_{il}-\hat{\mu}_X(t))^T ,\\
\hat{\sigma}_0^2 &:= \hat{\mu}_{02}\hat{\mu}_{00} - \hat{\mu}_{01}^2.
\label{eq:PGM:EST4}
\end{align}
The mean function $\mu_X(\cdot)$ for the predictor process $X(\cdot)$ is estimated by $\hat{\mu}_X(\cdot)$
by smoothing the aggregated data $(T_{il},X_{il})$ $i = 1,\dots,n,$ $j = 1,\dots,n_i,$ with local linear fitting~\citep{yao:05}. We then calculate  empirical weights using the auxiliary parameters from above as
\begin{align}
\hat{s}^G_{il}(x,t,h)= K_h(T_{il}-t) \left[\left(X_{il}-\hat{\mu}_X(t)\right)^T \hat{\Sigma}^{-1}_{20} \left(x-\hat{\mu}_X(t)\right) + \frac{1}{\hat{\sigma}_0^2}\left(\hat{\mu}_{02}-(T_{il}-t)\hat{\mu}_{01}\right)\right]. \label{eq:PGM:EST5}
\end{align}
The proposed partially global concurrent object regression (CORE) estimate is given by
\begin{eqnarray}\label{eq:PGM:EST6}
\hat{g}_\oplus(x,t) = \underset{\omega \in \Omega}{\text{argmin}} \, \hat{G}_\oplus(\omega,x,t), \,
\text{ where  } \hat{G}_\oplus(\omega,x,t) = \frac{1}{n} \sum_{i=1}^{n} \Bigl( \frac{1}{n_i} \sum_{l=1}^{n_i} \hat{s}^G_{il}(x,t,h)d^2(Y_{il},\omega) \Bigr).
\end{eqnarray} 
Further motivation of this approach,  starting from the case of Euclidean responses, can be found in the Supplementary Material Section~\ref{Appendix:A2}.
We show consistency with an optimal rate for the proposed model to the target CORE function in \eqref{eq:CORE} under assumptions  \ref{B1}-\ref{B6} (see the Supplementary Material Section~\ref{Appendix:A3}), which are similar to the assumptions  \ref{A1}-\ref{A6} in Section~\ref{NPM}.
\begin{Proposition}\label{Proposition3}
	Under the assumptions \ref{B1}-\ref{B3}, for any given points $t\in \T$ and $x = X(t) \in \reals^p,$ 
	\begin{equation*}
	d(m_\oplus(x,t), \tilde{g}_\oplus(x,t)) = O\left(h^2\right), \text{ as } h = h_n \rightarrow 0.
	\end{equation*}
\end{Proposition}

\begin{Proposition}\label{Proposition4}
	Under the assumptions \ref{B1}-\ref{B6}, for any given points $t\in \T$ and $x = X(t) \in \reals^p,$ 
	\begin{equation*}
	d(\hat{g}_\oplus(x,t), \tilde{g}(x,t)) = O_p((nh)^{-1/2}), \text{ as } h = h_n \rightarrow 0 \text{ and } nh \rightarrow \infty.
	\end{equation*}
\end{Proposition}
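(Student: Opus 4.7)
The plan is to treat $\hat{g}_\oplus(x,t_0)$ and $\tilde{g}_\oplus(x,t_0)$ as M-estimators of the criteria $\hat{G}_\oplus$ and $\tilde{G}_\oplus$ respectively, and to apply a standard rate-of-convergence theorem for M-estimators, in the same spirit as Theorem 4 of \cite{pete:mull:19}. The quadratic curvature of $\tilde{G}_\oplus$ at $\tilde{g}_\oplus$ provided by assumption (B6) will supply the denominator in the rate inversion, and the main task is to obtain a sharp modulus-of-continuity bound for the centered process $\hat{G}_\oplus(\omega) - \tilde{G}_\oplus(\omega)$ over a shrinking $d$-neighborhood of $\tilde{g}_\oplus$.

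The natural decomposition is
\[
\hat{G}_\oplus(\omega) - \tilde{G}_\oplus(\omega) = \bigl[\hat{G}_\oplus(\omega) - \check{G}_\oplus(\omega)\bigr] + \bigl[\check{G}_\oplus(\omega) - \tilde{G}_\oplus(\omega)\bigr],
\]
where $\check{G}_\oplus$ is an oracle version of $\hat{G}_\oplus$ obtained by plugging the population quantities $\mu, \mu_{0j}, \Sigma_{2j}, \sigma_0^2$ into the weight formula in place of $\hat{\mu}, \hat{\mu}_{0j}, \hat{\Sigma}_{2j}, \hat{\sigma}_0^2$ but keeping the two-stage empirical average over subjects and measurement times. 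The second bracket is a pure empirical-process increment with bounded integrands of order $K_h = O(h^{-1})$; the entropy condition in the B analog of (A\ref{A7}) together with a standard maximal inequality yields a modulus bound of order $\delta(nh)^{-1/2}$ on the ball $\{\omega : d(\omega, \tilde{g}_\oplus) < \delta\}$.

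The plug-in error demands more care. Kernel-moment calculations give $\hat{\mu}_{0j} - \mu_{0j} = O_p((nh)^{-1/2})$ and $\hat{\Sigma}_{2j} - \Sigma_{2j} = O_p((nh)^{-1/2})$, while $\bar{X}_l - \mu = O_p(n^{-1/2})$. Since $\Sigma_{20}$ and $\sigma_0^2$ remain bounded away from degeneracy under assumptions (B1)--(B2), a first-order expansion of matrix inversion around $\Sigma_{20}$ yields $|\hat{s}_{il}(x,t_0,h) - s^G(X_{il},x,T_{il},t_0,h)| = O_p((nh)^{-1/2}) K_h(T_{il}-t_0)$ uniformly in $i,l$. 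Aggregating and invoking total boundedness of $\Omega$ from (B5) contributes at most $O_p((nh)^{-1/2})$ to $\sup_\omega |\hat{G}_\oplus(\omega) - \check{G}_\oplus(\omega)|$, so the rate is not inflated by plug-in.

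Combining the two bounds yields $E[\sup_{d(\omega,\tilde{g}_\oplus) < \delta} |\hat{G}_\oplus(\omega) - \tilde{G}_\oplus(\omega)|] \lesssim \phi_n(\delta)$ with $\phi_n(\delta) \asymp \delta (nh)^{-1/2}$; together with the quadratic lower bound from (B6), the M-estimator recipe $r_n^2 \phi_n(r_n^{-1}) \lesssim 1$ solves to $r_n \asymp (nh)^{1/2}$, giving $d(\hat{g}_\oplus, \tilde{g}_\oplus) = O_p((nh)^{-1/2})$. The principal obstacle I anticipate is propagating the $(nh)^{-1/2}$ rate through the plug-in step for $\hat{\Sigma}_{20}^{-1}$ uniformly in $\omega$ while simultaneously accommodating the sparse, subject-varying design with unequal $n_i$: the two-stage averaging must be treated so that the rate stays $(nh)^{-1/2}$ rather than being inflated by subject heterogeneity, and initial consistency of $\hat{g}_\oplus$ (needed to localize to the region where (B6) applies) must be established beforehand via a uniform law on $\hat{G}_\oplus - \tilde{G}_\oplus$ combined with the well-separatedness condition analogous to (A\ref{A4}).
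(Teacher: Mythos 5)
Your proposal follows essentially the same route as the paper: the oracle criterion $\check{G}_\oplus$ built from the population weights is exactly the paper's auxiliary weights $s_{il}$, so your decomposition into a plug-in error and a centered empirical-process term coincides with the paper's display for $\hat{G}_\oplus(\omega)-\tilde{G}_\oplus(\omega)$, and the plug-in rates you cite ($\bar{X}_l-\mu=O_p(n^{-1/2})$, $\hat{\Sigma}_{20}-\Sigma_{20}=O_p((nh)^{-1/2})$, kernel-moment bounds of order $h^{-1}$ for second moments) match the paper's bounds on $U_0,U_1,V_0,V_1$. The final step---establishing consistency via well-separatedness and then converting the $\delta(nh)^{-1/2}$ modulus plus the quadratic curvature in (B6) into the $(nh)^{-1/2}$ rate through the standard M-estimation recipe---is precisely how the paper concludes, by following the argument of Proposition~\ref{Proposition2} and Theorem 4 of \cite{pete:mull:19}.
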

Combining these two results leads to the pointwise consistency for the partially global CORE estimator as follows:
\begin{Theorem}\label{thm2}
	Under \ref{B1}-\ref{B6},
	\begin{equation*}
	d(\hat{g}_\oplus(x,t), m_\oplus(x,t)) = O_p(h^2 + (nh)^{-1/2}), \text{ as } h = h_n \rightarrow 0 \text{ and } nh \rightarrow \infty.
	\end{equation*}
\end{Theorem}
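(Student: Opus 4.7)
The statement is essentially a corollary of Propositions 3 and 4, so the plan is to combine the deterministic bias bound with the stochastic fluctuation bound via the triangle inequality on the metric $d$. Specifically, for the fixed point $(x,t_0)$, I would write
\begin{equation*}
d(\hat{g}_\oplus(x,t_0),\, m_\oplus(x,t_0)) \le d(\hat{g}_\oplus(x,t_0),\, \tilde{g}_\oplus(x,t_0)) + d(\tilde{g}_\oplus(x,t_0),\, m_\oplus(x,t_0)),
\end{equation*}
which is valid because $d$ is a metric on $\Omega$ and all three quantities are well-defined $\Omega$-valued objects under assumption (B\ref{A3})-type existence/uniqueness conditions that Propositions 3 and 4 already invoke.

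Next, I would invoke Proposition 3 to control the first right-hand summand by $O(h^2)$ as a deterministic rate as $h \to 0$, and Proposition 4 to control the second right-hand summand by $O_p((nh)^{-1/2})$ as $h \to 0$ and $nh \to \infty$. Since a deterministic $O(h^2)$ bound is automatically an $O_p(h^2)$ bound, the sum of the two rates can be combined under a single $O_p(\cdot)$ statement, yielding
\begin{equation*}
d(\hat{g}_\oplus(x,t_0),\, m_\oplus(x,t_0)) = O_p\!\left(h^2 + (nh)^{-1/2}\right),
\end{equation*}
which is exactly the claim of the theorem.

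The only subtlety worth remarking on is to verify that the assumption bundles used by the two propositions are jointly consistent: Proposition 3 requires (B\ref{A1})-(B\ref{A3}) (bias/smoothness), while Proposition 4 additionally needs (B\ref{A4})-(B\ref{A6}) (curvature of the objective and empirical-process control). Since the theorem is stated under the full set (B\ref{A1})-(B\ref{A6}), both propositions apply simultaneously to the same $\tilde{g}_\oplus(x,t_0)$, so there is no compatibility issue. There is no real obstacle here beyond this bookkeeping, because the metric-space triangle inequality does all the work once the two rates have been separately established; in particular, no additional cross-term needs to be controlled, in contrast with what one would need if working on an inner-product-space decomposition.
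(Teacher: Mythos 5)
Your proposal is correct and matches the paper's own argument: the paper obtains Theorem 2 precisely by combining Propositions 3 and 4 through the triangle inequality in $(\Omega,d)$, with the deterministic $O(h^2)$ bias term and the $O_p((nh)^{-1/2})$ stochastic term added under a single $O_p$ statement. Your remark on the compatibility of the assumption sets (B1)--(B3) versus (B1)--(B6) is accurate bookkeeping and consistent with how the paper states the two propositions.
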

Comparing to the local rates of convergence for the Nonparametric CORE estimator, as proposed in Section~\ref{NPM}, the rates in Propositions~\ref{Proposition3} and \ref{Proposition4} are global in the predictor $X$ and remain unchanged even for a higher predictor dimension $p,\ p>1$. For $p=1$, both the estimators behave in a similar manner, however as $p$ increases the partially global estimator performs better in terms of the rate of convergence to the true CORE model in \eqref{eq:CORE}. While the above results are pointwise, a uniform convergence result in a compact interval in the $X$-direction also holds  for any given point in the $T$-direction, under slightly stronger  assumptions (see assumptions \ref{U1}-\ref{U4} in the Supplementary Material Section~\ref{Appendix:A3}). 
Denoting  the Euclidean norm on $\reals^p$ by $\lvert \lvert \cdot\rvert \rvert _E$, we obtain 
\begin{Theorem}\label{thm3}
	Under the assumptions \ref{U1}-\ref{U4}, 
	for any given $t \in \T$ and $M>0,\\ \text{ as } h = h_n \rightarrow 0 \text{ and } nh \rightarrow \infty,$
	\begin{equation*}
	\underset{||x||_E \leq M} {\sup}d(\hat{g}_\oplus(x,t), m_\oplus(x,t)) =
	O_p\left(h^2 + (nh)^{-1/2+ \delta}\right), \text{ \ for any } \delta >0.
	\end{equation*}
\end{Theorem}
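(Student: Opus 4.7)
The plan is to lift the pointwise rate of Theorem \ref{thm2} to a uniform rate over the ball $\{x:\|x\|_E\le M\}$ by exploiting the crucial fact that the weight function $s^G(z,x,t,t_0,h)=s_1(z,x,t,t_0,h)+s_2(t,t_0,h)$ is \emph{affine in $x$}. This structural feature makes the supremum in $x$ essentially free: it reduces to bounds on linear functionals, so the only nontrivial empirical-process work is over the object space $\Omega$. I would first split the error via the triangle inequality as
\begin{align*}
d(\hat{g}_\oplus(x,t_0),m_\oplus(x,t_0)) \le d(\hat{g}_\oplus(x,t_0),\tilde{g}_\oplus(x,t_0)) + d(\tilde{g}_\oplus(x,t_0),m_\oplus(x,t_0)).
\end{align*}
The bias piece is handled by revisiting the proof of Proposition \ref{Proposition3}: the $x$-dependence enters only through the factor $(x-\mu)$ in $s_1$, so the Taylor-expansion remainders produced there are uniformly $O(h^2)$ over $\|x\|_E\le M$, and the bias is asymptotically dominated by $(nh)^{-1/2+\delta}$ under the bandwidth conditions built into (U\ref{U1})--(U\ref{U4}).

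For the stochastic piece, I would use the affine structure to write
\begin{align*}
\hat{G}_\oplus(\omega,x) = \hat{A}(\omega) + x^{T}\hat{B}(\omega), \qquad \tilde{G}_\oplus(\omega,x) = A(\omega) + x^{T} B(\omega),
\end{align*}
where $\hat{A}(\omega)$ and $\hat{B}(\omega)$ are sample averages of $d^2(Y_{il},\omega)$ weighted by the constant and the linear-in-$x$ portions of $\hat{s}_{il}$, with population analogs $A, B$. Then
\begin{align*}
\sup_{\|x\|_E\le M}\bigl|\hat{G}_\oplus(\omega,x)-\tilde{G}_\oplus(\omega,x)\bigr| \le |\hat{A}(\omega)-A(\omega)| + M\,\|\hat{B}(\omega)-B(\omega)\|_E.
\end{align*}
Combined with the quadratic-curvature analog of (B\ref{B6}) and the basic inequality for M-estimators, this gives, on a high-probability localization event,
\begin{align*}
C \sup_{\|x\|_E\le M} d^2(\hat{g}_\oplus(x,t_0),\tilde{g}_\oplus(x,t_0)) \le 2\sup_{\omega}\bigl[|\hat{A}(\omega)-A(\omega)| + M\,\|\hat{B}(\omega)-B(\omega)\|_E\bigr],
\end{align*}
reducing the uniform-in-$x$ problem to two $\omega$-indexed empirical processes.

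The final step is to bound these $\omega$-suprema using the metric-entropy assumption (A\ref{A7}) (strengthened in (U\ref{U4})). Each summand defining $\hat{A},\hat{B}$ is a kernel-weighted function $\omega\mapsto K_h(T_{il}-t_0)\,d^2(Y_{il},\omega)$ with pointwise-in-$\omega$ variance of order $(nh)^{-1}$. A peeling plus chaining argument driven by $\int_0^1\sqrt{1+\log N(\epsilon,\mathcal{B}_\delta,d)}\,d\epsilon = O(1)$, combined with a Markov bound applied to a sufficiently high moment of the centered sum, yields a uniform-in-$\omega$ rate of $(nh)^{-1/2}$ times at most a polylogarithmic factor, which is absorbed into the arbitrarily small polynomial $(nh)^\delta$. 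Taking square roots via the curvature bound then gives the claim. The main obstacle is this final maximal inequality: verifying that the class $\{d^2(\cdot,\omega):\omega\in\mathcal{B}_\delta(m_\oplus(x,t_0))\}$ admits an integrable envelope and that the kernel weighting does not inflate the variance beyond $O((nh)^{-1})$, so that the only loss over the pointwise $(nh)^{-1/2}$ rate is the genuinely arbitrarily small polynomial factor $(nh)^\delta$, rather than a fixed polynomial loss that would degrade the stated rate.
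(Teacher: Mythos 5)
Your reduction to the affine-in-$x$ structure of $s^G$ is in the right spirit (it is essentially the paper's observation that the quantities $U_0(x)$ and $U_1(x,t_0)$ in the proof of Proposition~\ref{Proposition4} are $O_P(n^{-1/2})$ and $O_P((nh)^{-1/2})$ \emph{uniformly} over $\|x\|_E\le M$), but the key step of your stochastic bound loses a square root and therefore does not deliver the claimed rate. The basic M-estimation inequality you invoke gives $C\,d^2(\hat{g}_\oplus(x,t_0),\tilde{g}_\oplus(x,t_0))\le 2\sup_{\omega\in\Omega}|\hat{G}_\oplus(\omega,x)-\tilde{G}_\oplus(\omega,x)|$, and the right-hand side — the supremum over all of $\Omega$ of the \emph{uncentered} difference — is genuinely of order $(nh)^{-1/2}$; it does not become smaller when $\omega$ is close to $\tilde{g}_\oplus(x,t_0)$. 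Taking square roots, as you propose at the end, then yields only $\sup_x d(\hat{g}_\oplus,\tilde{g}_\oplus)=O_P\bigl((nh)^{-1/4+\delta}\bigr)$, which is strictly weaker than the stated $O_P\bigl((nh)^{-1/2+\delta}\bigr)$. To recover the correct rate one must work with the \emph{localized and recentered} process $\bigl[\hat{G}_\oplus-\tilde{G}_\oplus\bigr](\omega,x)-\bigl[\hat{G}_\oplus-\tilde{G}_\oplus\bigr](\tilde{g}_\oplus(x,t_0),x)$ restricted to $d(\omega,\tilde{g}_\oplus(x,t_0))\le\delta$, whose increments scale linearly in $\delta$ because $|d^2(Y,\omega_1)-d^2(Y,\omega_2)|\le 2\,\mathrm{diam}(\Omega)\,d(\omega_1,\omega_2)$, and then apply a peeling argument (van der Vaart and Wellner, Theorem 3.2.5) with the entropy bound of (U\ref{U3}); the $(nh)^{\delta}$ loss arises precisely from making this peeling uniform over $\|x\|_E\le M$. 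This is what the paper does by following the argument of Theorem 2 of Petersen and M\"uller (2019) after the affine-in-$x$ uniformity of $U_0,U_1$ is noted.

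A second, smaller omission: you assert a ``high-probability localization event'' uniformly in $x$, but pointwise consistency (Theorem~\ref{thm2}) does not by itself localize $\hat{g}_\oplus(x,t_0)$ near $\tilde{g}_\oplus(x,t_0)$ simultaneously for all $\|x\|_E\le M$. The paper first proves uniform consistency of $Z_n(x)=d(\hat{g}_\oplus(x,t_0),m_\oplus(x,t_0))$ via asymptotic equicontinuity (van der Vaart and Wellner, Theorem 1.5.4), using the uniform continuity of $x\mapsto m_\oplus(x,t_0)$ on the compact ball (which again exploits linearity of $s^G$ in $x$) together with the separation conditions (U\ref{U1})--(U\ref{U2}); only then is the rate argument run. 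You would need an analogous step before your curvature bound can be applied uniformly in $x$. Your treatment of the bias term, by contrast, is consistent with the paper's Proposition~\ref{Proposition3} and is not where the difficulty lies.
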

The proofs require results from empirical process theory.

\section{Simulation studies}
\label{SIM}
\subsection{Distributional object responses}
\label{SIM:dens}
We illustrate the efficacy of the proposed methods through simulations, where the  space of distributions with the Wasserstein metric provides an ideal setting. We consider time-varying distributions on a bounded domain $\mathcal{T}$ as the response, $Y(\cdot)$, and they are represented by the respective quantile functions $Q(Y)(\cdot)$.  The time-varying Euclidean random variable $X(\cdot)$ is taken as the predictor. 
The random response is generated conditional on $(X(T),T)$, by adding noise to the true regression quantile
\begin{align}
Q(m_\oplus(x,t))(\cdot) &= \E{Q(Y)(\cdot)|X(t) = x, T= t}. 
\label{eq:SIMUL1}
\end{align}

Two different simulation scenarios are examined as we generate the distribution objects from location-scale shift families (see Table~\ref{table:sim}). In the first setting, the response is generated, on average, as a normal distribution with parameters that depend on $(T,X(T))$. For $T = t, \ X(T) =x$, the distribution parameters $\mu \sim N(\zeta(x,t), \nu_1)$ and $\sigma \sim Gamma\left(\frac{\eta^2(x,t)}{\nu_2}, \frac{\nu_2}{\eta^2(x,t)}\right)$ are independently sampled. The corresponding distribution is given by $Q(Y)(\cdot) = \mu + \sigma \Phi^{-1}(\cdot)$. Here, the relevant sub-parameters are chosen as $\nu_1 = 0.1,$  $\nu_2 = 0.1$, $\zeta(x,t) = 0.5 + 0.1x +0.1t^2$, and $\eta(x,t) = 0.5 + 0.1x + 0.1 \sin (10\pi t)$, and $\Phi(\cdot)$ is the standard normal distribution function. 

The second setting is slightly more complicated. The distributional parameter $\mu|(X(t)=x, T=t)$ is sampled as before and $\sigma = 0.1$ is assumed to be a fixed parameter. The resulting distribution is then ``transported'' in Wasserstein space via a random transport map $T$, that is uniformly sampled from the collection of maps $T_k(a) = a - \sin (ka)/|k|$ for $k \in {\pm 1, \pm 2}$. The distributions thus generated are not Gaussian anymore due to the transportation. Nevertheless, one can show that the Fr\'echet mean is exactly $ \mu + \sigma \Phi^{-1}(\cdot)$ as before.

\begin{table}[h]
	\centering
	\begin{tabular}{|l|l|}
		\hline
		Setting I &
		Setting II \\ \hline
		\begin{tabular}[c]{@{}l@{}}$Q(Y)(\cdot) = \mu + \sigma \Phi^{-1}(\cdot) $, \\ where \\ $\mu \sim N(\zeta(x,t), \nu_1) $\\ $\sigma \sim Gamma\left(\frac{\eta^2(x,t)}{\nu_2}, \frac{\nu_2}{\eta^2(x,t)}\right)$\end{tabular} &
		\begin{tabular}[c]{@{}l@{}}$Q(Y)(\cdot) = T \#(\mu +\sigma \Phi^{-1}(\cdot)) $, \\ where\\ $\mu \sim N(\zeta(x,t), \nu_1) $\\ $\sigma = 0.1,$ \\ $T_k(a) = a - sin(ka)/|a|,  k \in \{\pm1,\pm 2\}$\end{tabular} \\ \hline
	\end{tabular}
	\caption{Table showing two different simulation scenarios.}
	\label{table:sim}
\end{table}

To this end, we generated a random sample of size $n$ of time-varying response and predictors from the true models, where the $i^{\text{th}}$ sample was observed at $n_i$ random time points, incorporating measurement error as described in the two situations above. For simplicity, we chose $n_i =m$ to be equal for all subjects and consider the two cases with $n_i = 5$ and $n_i=20.$ Each such case was repeated for sample sizes $n = 100$ and $n = 1000.$ For a given $n_i$ and $n, $ we first sampled the time points $T_{il} \overset{i.i.d.}{\sim} Unif(0,1)$ for $\ l = 1, \dots n_i$ and $i=1,\dots,n.$
The predictor trajectories $X_i(\cdot)$ were generated as follows. The simulated processes $X$ had the mean function $\mu_X(t) = t + \sin (t)$, with covariance function constructed from $K= 10$ eigen functions, $\phi_1(t) = -\cos(\pi t/10)/\sqrt{5}$, and $\phi_j(t) = \sin((2j - 1)\pi t/10)/\sqrt{5},$ for $t \in [0,1],$ $j=2,\dots K.$ We chose $\lambda_1 = 1,$ $\lambda_2 = .7,$  to be the first two eigen values and $\lambda_j = (0.7)^{j-1}$ for $j=3,\dots,K$ as the remaining eigenvalues. The FPC scores $\xi_{ij}$s were generated from $N(0,\lambda_j)$ truncated on $[-6,6]$ for $j=1\dots,K.$ Using the Karhunen–Lo\`eve theorem, the predictor process is generated at the random time-points $T+{il}$ as $X_{i}(T_{il}) = \mu_X(T_{il}) + \sum_{j=1}^{K}\xi_{ij}\phi_j(T_{il})$ for $l=1\dots,n_i$ and $i=1\dots,n.$

For each of Setting I and II, $500$ Monte Carlo runs were executed for a combination of sample
sizes $n$ and $n_i$, including both sparse and dense designs. For the $r^{\text{th}}$ simulation,  $\hat{f}^r_\oplus(x,t)$ denoting the fitted distribution function, and $f^r_\oplus(x,t)$ denoting the simulated density objects, the utility of the estimation was measured quantitatively by the integrated squared errors
\begin{align}
\text{ISE}_r &= \int_{0}^{1} \int_{-6}^{6} d_W^2(\hat{f}^r_\oplus(x,t), f^r_\oplus(x,t)) dx dt,
\end{align}
where $d_W$ denotes the Wasserstein metric between two distributions.
\begin{figure}[!h]	
	\centering
	\includegraphics[width=.7\textwidth]{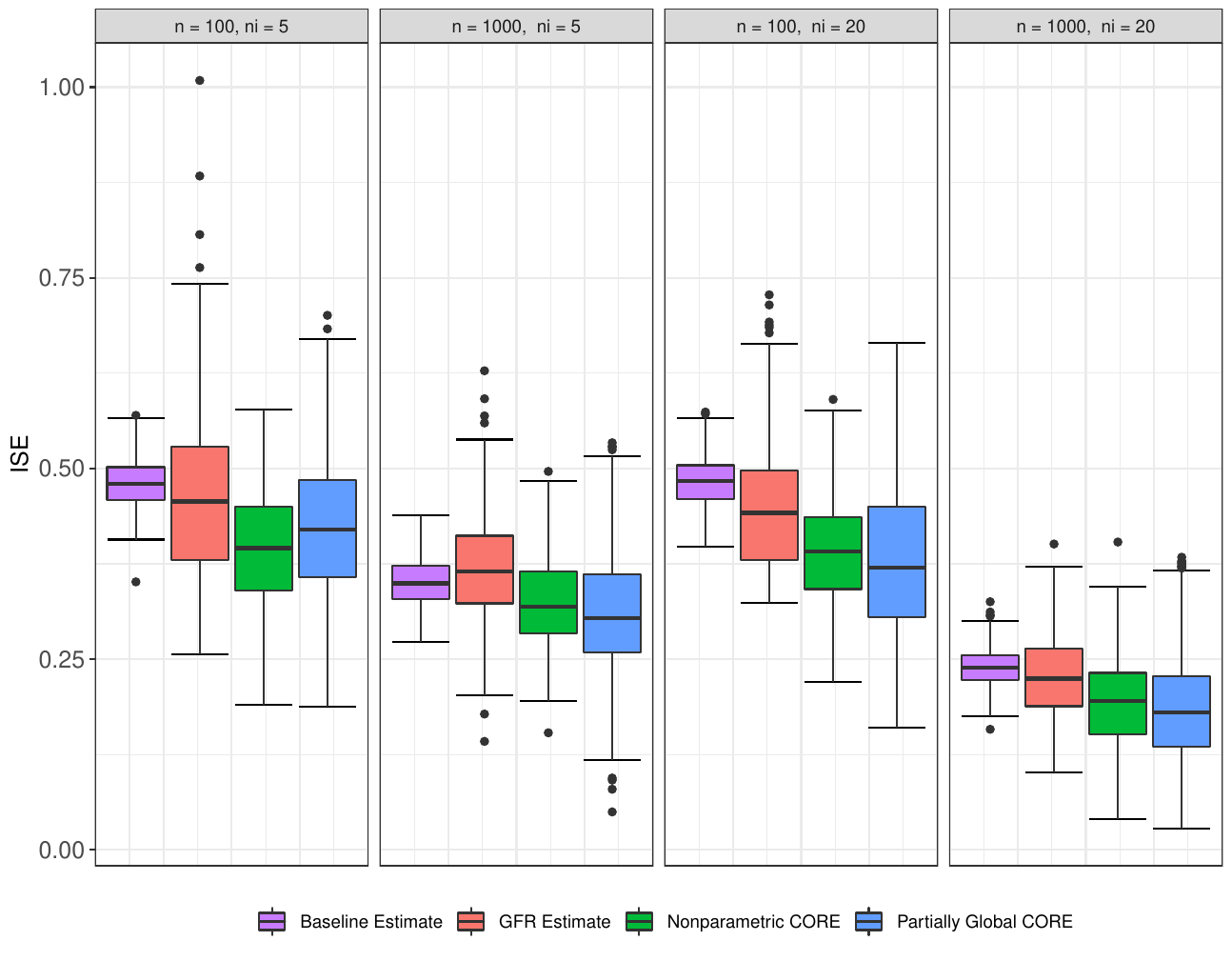}
	\caption{Boxplots of Integrated Squared Errors (ISE) over $500$ simulation runs and different sample sizes for density estimates resulting from partially global and nonparametric concurrent object regression (CORE), global Fr\'echet regression (GFR) and a baseline model in the simulation setting I, as described in Table~\ref{table:sim}.}
	\label{Fig:Sim:Boxplot}
\end{figure}	
We fitted both of the nonparametric and partially global concurrent object regression (CORE) models over a grid of points $x = x(t) \in [-6,6]$ and $t \in [0,1].$ The bandwidths for the estimation in both the settings were chosen over a grid of possible values using a cross validation criterion so as to minimize the average ISE for all simulations. For the $x-$ direction a grid of bandwidths $h_2 \in [n^{-1/5},3.18n^{-1/5}]$ was used for this purpose, while for the $t-$ direction a grid of bandwidths $h_1 \in [0.05n^{-1/5},0.265n^{-1/5}]$ was used. A truncated bivariate Gaussian product kernel and a truncated univariate Gaussian kernel were chosen to fit the nonparametric CORE and the partially global CORE methods, respectively.

In Setting I, the performances of the proposed CORE models were compared to a baseline linear concurrent model, which is mis-specified in our case. As such, since in the first setting we knew the finite-dimensional generating model, we computed the mean $\mu_i(T_{il})$ and standard deviation $\sigma_i(T_{il})$ of the distribution $Y_{il}$ and regressed each of them  linearly against the predictors $(X_{il},T_{il})$. The quantile functions for the baseline model was computed as $\hat{\mu}(x,t) + \hat{\sigma}(x,t) \Phi^{-1}(\cdot),$ where $\hat{\mu}(x,t)$ and $\hat{\sigma}(x,t)$ were the estimated mean and variance functions at $(x,t)$ using the fitted coefficients from the previous step. Clearly, the baseline concurrent model is mis-specified, but it highlights the fact that the proposed CORE models are the only applicable regression model, to the best of our knowledge, in the context of concurrent regression for distributional object responses. We also compared the performance of the CORE models to that of the global Fr\'echet regression (GFR) model~\citep{pete:mull:19} where $T$ and $X$ were used as a two-dimensional predictor, ignoring the inherent nested structure of the predictor space $(T,X(T))$. We observed a decrease in ISE for all the models as the sample size was increased, favorably  for the denser design with $n_i = 20$ (see Figure~\ref{Fig:Sim:Boxplot}). The CORE models outperformed both the baseline (mis-specified) model and the GFR model. Further, the partially global CORE had slightly lower ISE value than the nonparametric one, specially for denser designs. This is expected since in this simulation setting, the global model holds true in the $x-$ direction.
\begin{figure}[!htb]	
	\centering
	\includegraphics[width=.7\textwidth]{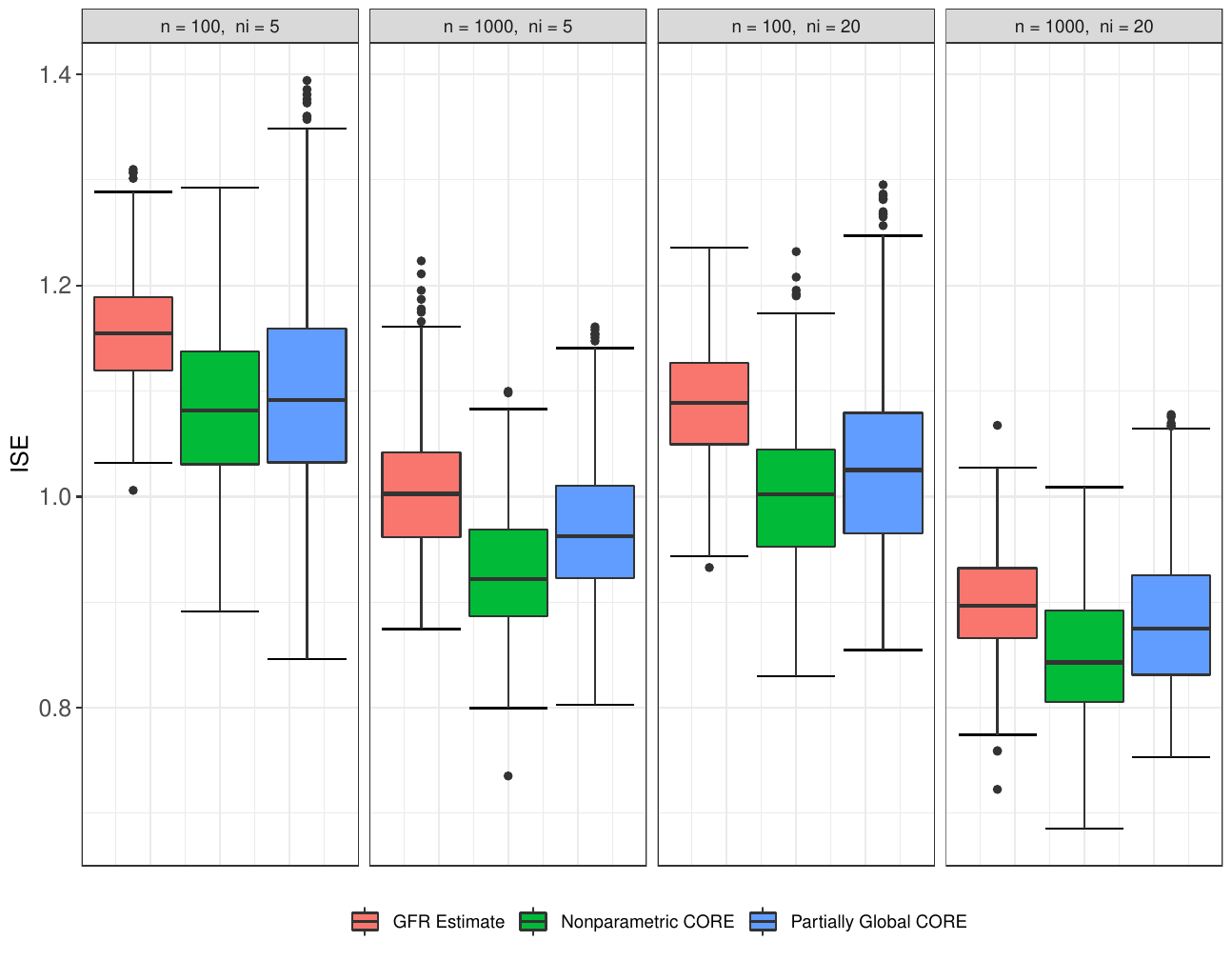}
	\caption{Boxplots of Integrated Squared Errors (ISE) for $500$ simulation runs and different sample sizes for density estimates resulting from  partially global and nonparametric concurrent object regression (CORE) and global Fr\'echet regression (GFR) for simulation setting II, as described in Table~\ref{table:sim}.}
	\label{Fig:Sim:Boxplot2}
\end{figure}	

In the second simulation setting, the baseline linear model is no longer admissible due to the random transportation step, thus the baseline model is dropped for the comparison purpose. However, we could still compare the performances among the two proposed CORE models and the GFR model. Both CORE methods performed in a similar manner and outperformed the GFR in all scenarios (see Figure~\ref{Fig:Sim:Boxplot2}). We again observed a decreasing pattern of the integrated squared errors for increasing sample sizes and denser designs, demonstrating the validity of the CORE models for this complex and time-varying regression setting. The nonparametric CORE performed better for a higher sample size. This is not unexpected since the data generating mechanism was non-linear and the partially global model assumes a linear dependence in the $x-$ direction. Further, the comparative performance of the partially global CORE method to that of GFR was studied for increasing the predictor dimension $p$ (see the Supplementary Material Section~\ref{Appendix:A6}). For the implementation of the GFR method again the nested structure of the predictor space $(T,X(T))$ was ignored, as such $T\in \reals$ and $X\in\reals^{p}$ were treated as a $p+1$ dimensional predictor input for the model.

\subsection{Object responses on a unit sphere}
\label{SIM:SPHE}
We next implemented our methodology when the responses lie on a Riemannian manifold object space - in particular we considered responses lying on the surface of a unit sphere $S^2$ in $\reals^3$ with the center being the origin. The geodesic distance between any two points $\omega_1$ and $\omega_2$ lying on the surface of the unit sphere $S^2$ is given by $d(\omega_1,\omega_2) = \arccos (\omega_1^T\omega_2).$ We considered the concurrent object regression function as follows
\[
m_\oplus(x,t) = ((1-\left(x/a\right)^2)^{1/2}\cos(\pi t),(1- (x/a)^2)^{1/2}\sin(\pi t), (x/a)),\ t\in(0,1),\  x\in (-a,a),\ a>0.
\]
We first generated the predictor process $(T_{il},X_i(T_{il}))$ as before (see Section~\ref{SIM:dens}) such that $T_{il} \in (0,1)$ and $X_i(T_{il}) \in (-a,a)$ with $a=6$ for $l=1,\dots,n_i,$ $i=1\dots,n.$ The response was then constructed as follows. A bivariate noise random vector was generated  on the tangent space $T_{m_\oplus(X_{il},T_{il})}(\Omega).$ To this end, we defined $\psi_{il} = \arcsin (T_{il})$ and $\theta_{il} = \pi T_{il}.$ An
orthonormal basis for the tangent space was denoted by $(b_{1il},b_{2il}),$ where
$b_{1il} = (\cos(\psi_{il}) \cos(\theta_{il}), \cos(\psi_{il}) \sin(\theta_{il}), -\sin(\psi_{il}))^T$ and $b_{2il} = (\sin(\theta_{il}), -\cos(\theta_{il}),0)^T.$
Adding a noise level $\sigma^2 = 0.1$,  bivariate random vectors $Z_{il} = c_{i1}b_{1il} + c_{i2}b_{2il}$ were computed, where $C_i = (c_{i1},c_{i2})^T \overset{i.i.d.}{\sim} N_2(0,\sigma^2I_2)$ with $\sigma^2 = 0.1.$ Finally, the response was constructed as
\[
Y_{il} = \cos\left( \lVert Z_{il}\rVert_E\right) m_\oplus(X_{il},T_{il}) + \sin \left( \lVert Z_{il}\rVert_E\right) \frac{Z_{il}}{\lVert Z_{il}\rVert_E},
\]
with $\lVert \cdot \rVert_E$ being the Euclidean norm. The simulation  steps produced a point $Y_{il}$ on the surface of the two-dimensional sphere with conditional Fr\'echet mean $m_\oplus(X_{il},T_{il})$ contaminated with a small level of noise.
\begin{figure}[!h]	
	\centering
	\includegraphics[width=.7\textwidth]{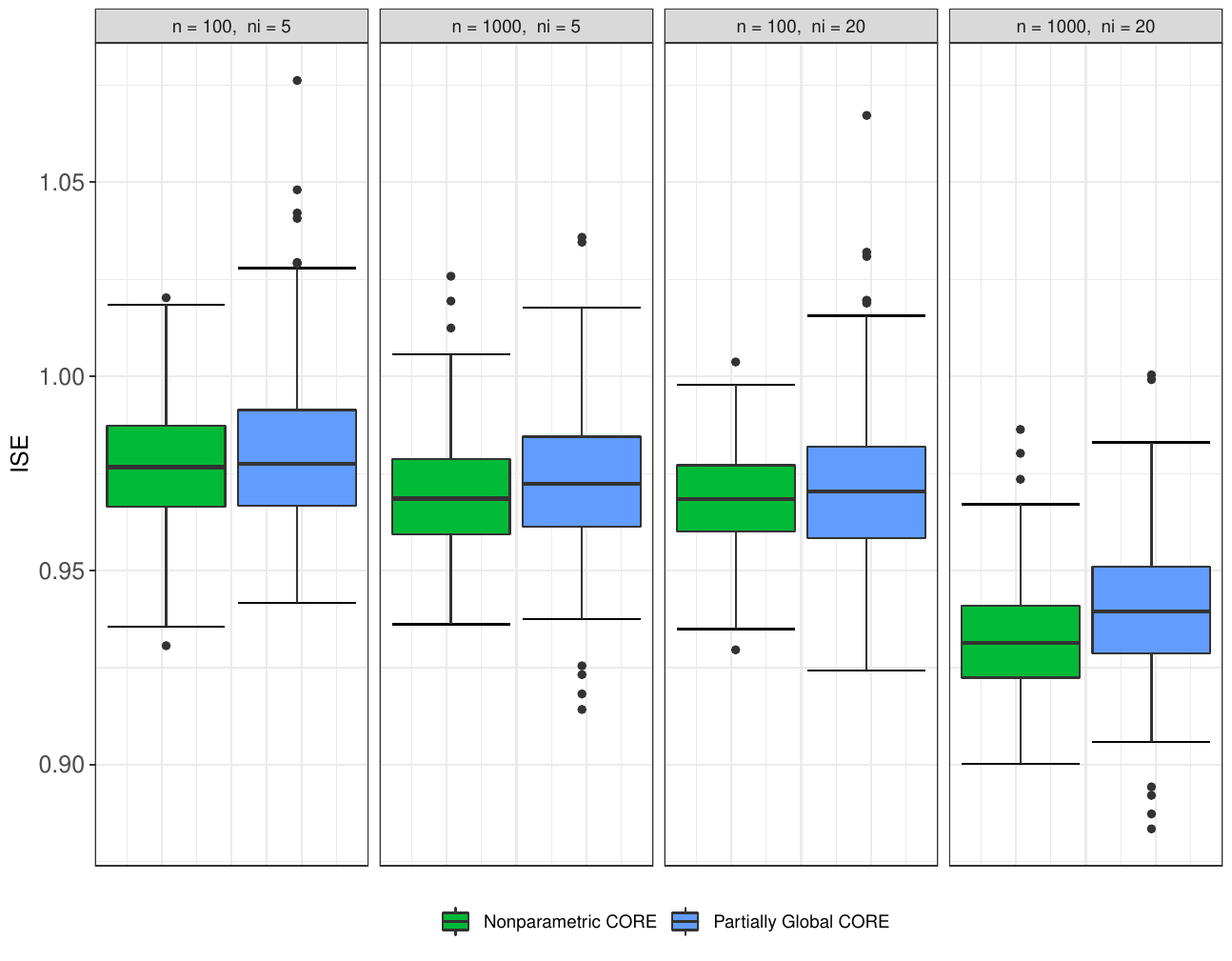}
	\caption{Boxplots of Integrated Squared Errors (ISE) for $500$ simulation runs and different sample sizes for object responses on the surface of the unit sphere $S^2$ resulting from  partially global and nonparametric concurrent object regression (CORE).}
	\label{Fig:Sim:Boxplot3}
\end{figure}	

We fitted the two concurrent object regression (CORE) models for the simulated data over a grid of points $x(t) =x \in (-6,6)$ and $t \in (0,1)$. For each of the CORE models,
$500$ Monte Carlo runs were implemented corresponding to combinations of sample
sizes $n$ and $n_i$, including both sparse and dense designs. For the $r^{\text{th}}$ simulation, at any given point  $(x,t),$ $Y^r_\oplus(x,t)$ and $\hat{Y}^r_\oplus(x,t)$ denoted the simulated and fitted objects on the surface of the unit sphere $S^2.$ The performance of the model was measured quantitatively by the integrated squared errors
\begin{align}
\text{ISE}_r &= \int_{0}^{1} \int_{-6}^{6} d_g^2(\hat{Y}^r_\oplus(x,t), Y^r_\oplus(x,t)) dx dt,
\end{align}
where $d_g$ denotes the geodesic distance on a unit sphere $S^2.$ The bandwidths for the estimation were chosen using a cross validation criterion so as to minimize the average ISE over all simulations, and a truncated Gaussian kernel was chosen.
Figure~\ref{Fig:Sim:Boxplot3} shows that, as before, with an increasing sample size and denser design the average ISE reduces for both nonparametric and partially global CORE models. Further, in this highly nonlinear simulation scenario, the nonparametric CORE performs better in terms of lower estimation error, specially for a larger sample size and dense design.
\section{Data illustrations}
\label{DATA}
\subsection{Brain connectivity in Alzheimer's disease} \label{DATA:fMRI}
Modern functional Magnetic Resonance Imaging (fMRI) methodology has made it possible to study structural elements of the brain and identify brain regions or cortical hubs that  exhibit similar behavior, especially when subjects are in the resting state \citep{allen:14,ferr:busa:13}. 

In resting state fMRI, a time series of Blood Oxygen Level Dependent (BOLD) signal is observed for the seed voxels in selected functional hubs. For each hub, a seed voxel is identified as the voxel whose signal has the highest correlation with the signals of nearby voxels. 

Alzheimer's Disease has been found to have associations with anomalies in functional integration of brain regions and target regions or hubs of high connectivity in the brain \citep{damo:12,zhang:10}.

Data used in the preparation of this article were obtained from the Alzheimer’s Disease Neuroimaging Initiative (ADNI) database (\url{adni.loni.usc.edu}). For up-to-date information, see \url{www.adni-info.org}. Brain image-scans for subjects in different stages of the disease were available, along with other relevant information such as age, gender, and total cognitive score, recorded on the same date as the scan. 

For this analysis, subjects aged from $55$ to $90$ years and belonging to either of the Alzheimer's Disease (AD) or Cognitive Normal (CN) patient groups were considered. After removing the outliers, the number of image scans recorded were $174$ and $694$, respectively, for the $78$ AD subjects and $371$ CN subjects who participated in the study. To confirm that the age intervals across the two groups are comparable, we first performed a Kruskal- Wallis test for the null hypothesis of equal age distributions of the two groups, which resulted in a p-value of $0.92$, indicating no evidence for systematic age differences.

BOLD signals for $V= 10$ brain seed voxels for each subject were extracted. The $10$ hubs where the voxels are situated are labeled as follows: LMF and RMF (left and right middle-frontal), LPL and RPL (left and right parietal), LMT and RMT (left and right middle temporal), MSF (medial superior frontal), MP (medial prefrontal), PCP (posterior cingulate/precuneus) and RS (right supramarginal), as discussed in \cite{buck:09}. 

The preprocessing of the BOLD signals was implemented by adopting the standard procedures of slice-timing correction, head motion correction and normalization and other standard steps. The signals for each subject were recorded over the interval $[0, 270]$ (in seconds), with $K=136$ measurements available at 2 second intervals. From this the temporal correlations were computed to construct the connectivity correlation matrix, also referred to as the Pearson correlation matrix in the area of fMRI studies. 

The observations were available sparsely at random time-points, such that the $i^{\text{th}}$ subject is observed at $n_i$ time-points, $n_i$ varying from a minimum of $1$ to a maximum of $7$. The inter-hub connectivity Pearson correlation matrix $Y_{il}$, for the $i^{\text{th}}$ subject observed at age $T_{il}$ (measured in years), has the $(q,r)^{\text{th}}$ element 
\begin{align}
(Y_{il})_{qr} = \frac{\sum_{p=1}^{K}(s_{ipq} - \bar{s}_{iq}) (s_{ipr} - \bar{s}_{ir})}{\left[\left(\sum_{p=1}^{K}(s_{ipq} - \bar{s}_{iq})^2\right) \left(\sum_{p=1}^{K}(s_{ipq} - \bar{s}_{iq})^2 \right) \right]^{1/2}},
\label{eq:PEARSON:CORR}
\end{align}
where $s_{ipq}$ is the $(p,q)^{\text{th}}$ element of the signal matrix for the $i^{\text{th}}$ subject and $\bar{s}_{iq} := \frac{1}{K}\sum_{p=1}^{K} s_{ipq}$ is the mean signal strength for the $q^{\text{th}}$ voxel.

For Alzheimer's disease trials, ADAS-Cog-13 is a widely-used measure of cognitive performance. It measures impairments across several cognitive domains that are considered to be affected early and characteristically in Alzheimer's disease \citep{scara:18, kuep:18}. It is important to note that higher scores are associated with more serious cognitive deficiency. 
To study how functional connectivity in the brain varies with the total cognitive score for subjects at different ages, we applied the CORE models. It is known that age affects both functional connectivity in the brain and total cognitive score so that the relation of cognitive deficits with brain connectivity likely changes with age. 

We implemented a time-varying or concurrent regression framework with the Pearson correlation matrices in \eqref{eq:PEARSON:CORR} as time-varying object responses, residing in the metric space of correlation matrices equipped with the Frobenius norm, and total cognitive scores as real-valued covariates, changing with time (age in years). Specifically, we fitted the nonparametric CORE in \eqref{eq:NPM:MODEL} separately for the AD and CN subjects over different output points for age $t$ and total cognitive score $x$.
The bandwidths in the local fits for both the age and total cognitive score directions were chosen satisfying a leave-one-out cross validation criterion with a bivariate Normal kernel function, which led to the bandwidths in Table~\ref{Table:Bandw:NPM}.


\begin{table}[h!]
	\centering
	\begin{tabular}{  ccc } 
		\hline
		&AD &  CN \\ 
		\hline
		$h_1$ & 3.95 & 3.64 \\ 
		\hline
		$h_2$ & 3.78 & 2.43 \\ 
		\hline
	\end{tabular}
	\caption{Bandwidths used in the nonparametric CORE model for the AD and CN subjects, here $h_1$ is the bandwidth for age and $h_2$ for total cognitive score.}
	\label{Table:Bandw:NPM}
\end{table}

We fitted the proposed model at the $x =10\%,\ 50\%,\ $ and $90\%$ quantile values in the total cognitive score direction, where higher total score means larger cognitive impairment. We find that for higher scores and thus increased cognitive impairment, the overall  magnitude of the absolute values of the pairwise correlations are smaller, and interestingly there are fewer negative correlations. These effects are more pronounced at older age.  

\begin{figure}[!htb]
	\centering
	\includegraphics[width=.9\textwidth]{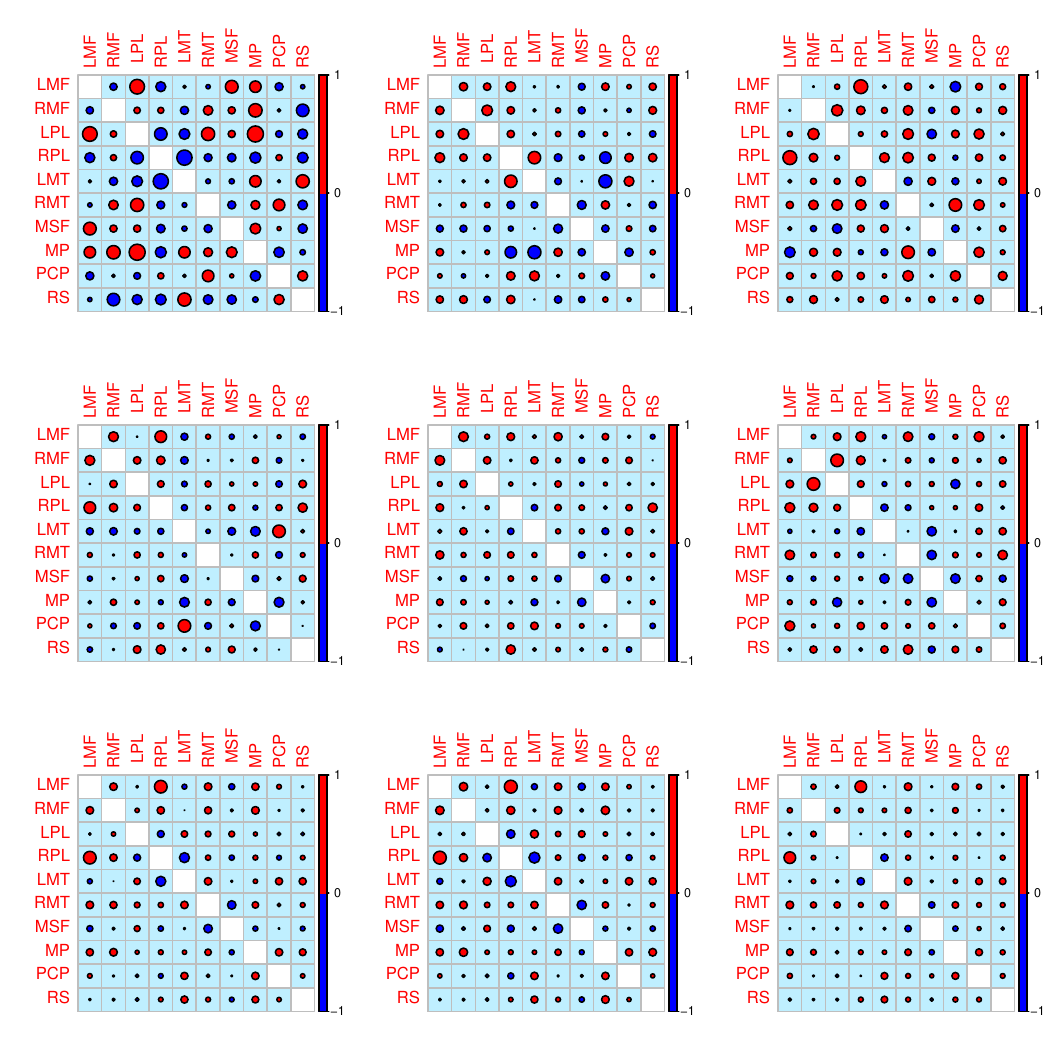}
	\caption{Estimated correlation matrix for the AD subjects fitted locally using nonparametric CORE in \eqref{eq:NPM:MODEL}. The top, middle and bottom rows show, respectively, the fitted correlation matrices at $10\%,\ 50\%,\ $ and $90\%$ quantiles of age. For each such age quantile, the columns (from left to right) depict the estimated correlation structure at $x = 10\%,\ 50\%,\ $ and $90\%$ quantiles of total cognitive score respectively. Positive (negative) values are drawn in red (blue) and larger circles correspond to larger absolute values. The figure illustrates the dependence of functional connectivity on total cognitive score, modulated by age.}
	\label{Figure AD_NPM_fit}
\end{figure}

Perhaps the most interesting finding from the fit (Figure~\ref{Figure AD_NPM_fit}) is the variation of  Negative Functional Connectivity (NFC) for the AD subjects \citep{zhou:10, brier:12, wang:07}. The positive pairwise correlations between the functional hubs, though reduced in magnitude, have a higher count  when moving  from a lower to a higher value in the total cognitive score direction. However, in the same context, the negative correlations diminish much more ostensibly in number and magnitude. Thus an increasing  reduction in the negative connectivity can be  associated with higher cognitive impairment,  and hence an increased cognitive impairment, in the AD subjects.

Also, the association between the functional connectivity and total cognitive score is modulated by age,  in the sense that at lower ages the association between cognitive impairment and reduction in Negative Functional Connectivity  is weaker than it is at higher ages.
Table~\ref{Tab:Corr:Magnitude:Diff} shows the difference, measured from the fits in Figure~\ref{Figure AD_NPM_fit}, between the total magnitude of the positive and the negative pairwise correlations, the latter being subtracted from the former. 
At each fixed  age $t$, the difference decreases with an increased value of the total cognitive score $x$, where the absolute values of the difference depend on age.  A similar concurrent or time-varying pattern in the estimated correlation matrices is also present for the CN subjects (Figure~\ref{A6:Fig:CN_NPM_fit} in the Supplementary Material Section~\ref{Appendix:A6}).


\begin{table}[]
	\centering
	\begin{tabular}{c|c|c|c}
		\hline
		& Lower Score & Median Score & Higher Score \\ \hline
		Lower Age  & 10.23       & 9.16         & 7.12         \\ \hline
		Median Age & 9.55        & 7.96         & 7.92         \\ \hline
		Higher Age & 9.27        & 7.70         & 7.25         \\ \hline
	\end{tabular}
	\caption{Difference in the total magnitude of positive correlations and the total magnitude of total negative correlation present in the estimated matrices in Figure~\ref{Figure AD_NPM_fit} at varying output points of total cognitive score and age. The lower, median, and higher levels are $10\%$, $50\%$, and $90\%$ quantiles, respectively, for both the total cognitive score and the age directions.}
	\label{Tab:Corr:Magnitude:Diff}
\end{table}

We also fitted the partially global CORE, as defined in \eqref{eq:PGM:MODEL}, to the same data and compared their performance, where the effect of total cognitive scores on the age-dependent functional connectivity correlation matrices is modeled as linear and the effect of age as nonparametric. To this end, the model was fitted separately for the AD and the CN subjects. The bandwidth parameter in the ``age'' direction was again chosen using a leave-one-out cross validation criterion and a Gaussian kernel was used. For the AD and CN subjects the optimal bandwidths were found to be $4.12$ and $3.22$, respectively.  We present the fits corresponding to the AD subjects over a range of output points in Figure ~\ref{Figure AD_PGM_fit}. We find a very similar pattern for the fitted correlation. The positive correlations increase in magnitude and quantity with increasing total cognitive score and age, while the curious changes in the Negative Functional Correlations are again noted.

\begin{figure}[!htb]
	\centering
	\includegraphics[width=.9\textwidth]{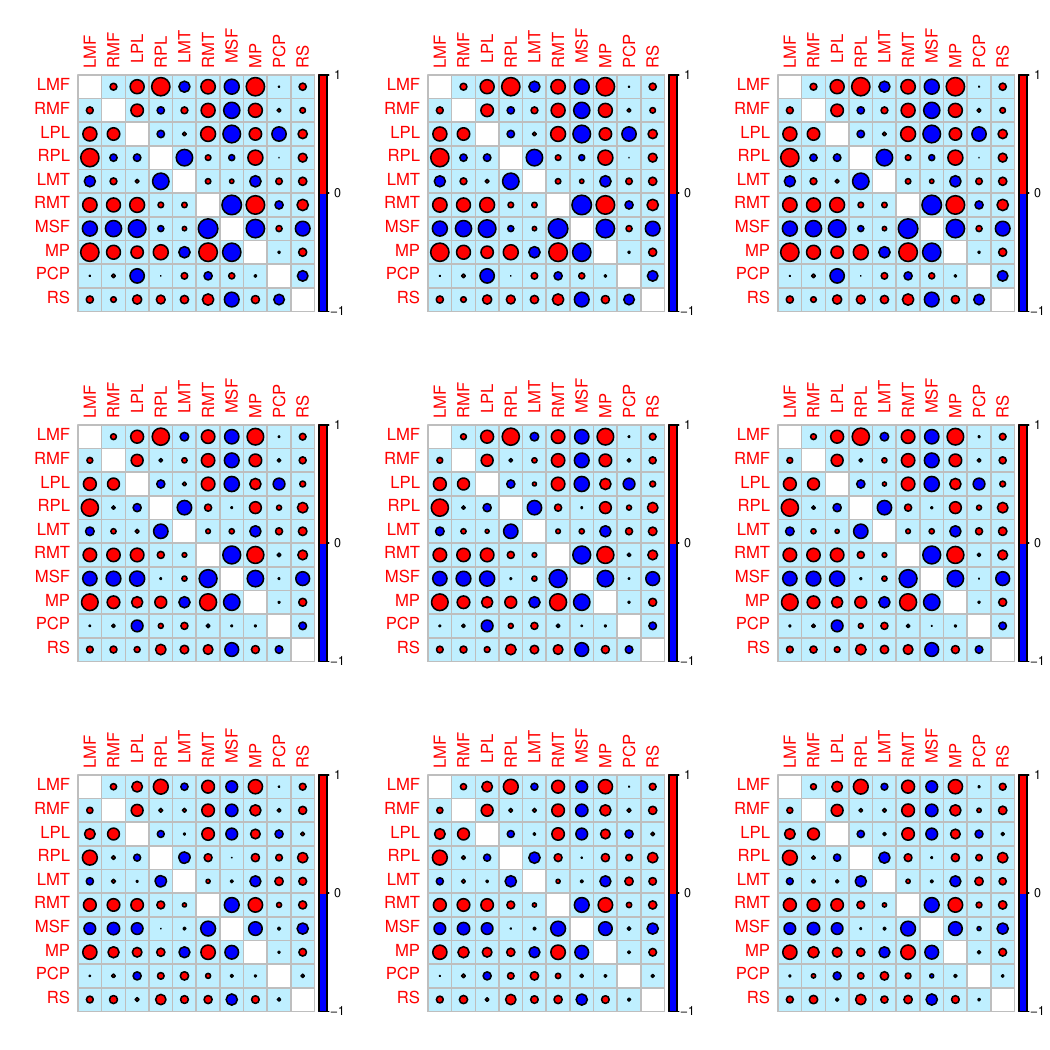}
	\caption{Estimated correlation matrix for the AD subjects fitted locally using partially global CORE in \eqref{eq:PGM:MODEL}. The top, middle and bottom rows show, respectively, the fitted correlation matrices at $10\%,\ 50\%,\ $ and $90\%$ quantiles of age. For each such age quantile, the columns (from left to right) depict the estimated correlation structure at $x = 10\%,\ 50\%,\ $ and $90\%$ quantiles of total cognitive score respectively. Positive (negative) values are drawn in red (blue) and larger circles correspond to larger absolute values. The figure illustrates the dependence of functional connectivity on total cognitive score, modulated by age.}
	\label{Figure AD_PGM_fit}
\end{figure}

To investigate the comparative goodness-of-fit of the two models, we  computed 
the average deviation of the fitted  from the observed correlation matrices over the age interval $[55,90]$,
\begin{align}
\text{MSE}_\oplus(t) := d_F^2(M_\oplus(t),\hat{M}_\oplus(t)),
\label{eq:FROB:DEV}
\end{align}
$M_\oplus(t)$ and $\hat{M}_\oplus(t)$ being the observed and fitted connectivity matrices, respectively,  at  age $t \in [55,90]$ and $d_F(\cdot,\cdot)$ the Frobenius distance between two correlation matrices.
Deviation \eqref{eq:FROB:DEV} is displayed in Figure~\ref{Fig:DistF:AD} for both the nonparametric and partially global CORE models. The partially global model seems to fit the data  better, which could indicate that the linear constraint for the impact of total cognitive score imposed in the partially Ggobal Core model is likely satisfied. 
The integrated deviance $\int_{\mathcal{T}} \text{MSE}_\oplus(t) dt$ is  $0.0570$ for the nonparametric CORE and $ 0.0494$ for the Partially CORE.

\begin{figure}[!htb]
	\centering
	\includegraphics[width=5cm]{./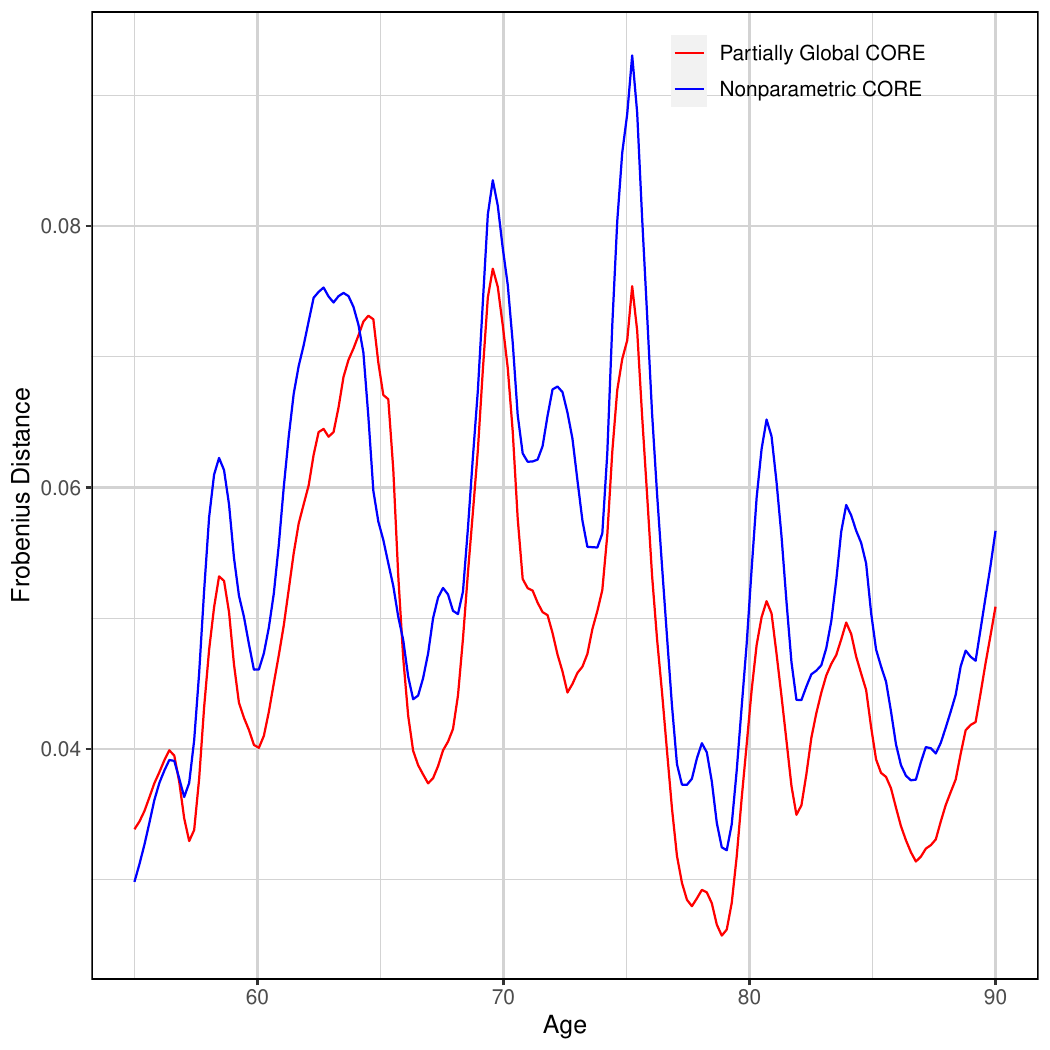}
	\caption{Comparison of fits for the two CORE models. The figure shows the Frobenius distances between the fitted and the observed correlation matrices across age for the AD subjects  using the nonparametric CORE model (blue) and the partially global CORE model (red) are illustrated.}
	\label{Fig:DistF:AD}
\end{figure}

We further look into the out-of-sample prediction performance of the two methods for the AD subjects and CN subjects separately. For this, we first randomly split the dataset into a training set with sample size $n_{\text{train}}$ and a test set with the remaining $n_{\text{test}}$ subjects. We then take the fitted objects obtained  from the training set, and predict the responses in the test set using the covariates present in the test set. As a measure of the efficacy of the fitted model, we  compute root mean squared prediction error as
\begin{align}
\text{RMPE} = \left[\frac{1}{n_{\text{test}}}\sum_{i=1}^{n_{\text{test}}} n_i^{-1}\sum_{l=1}^{n_i} d_F^2\left(Y^{\text{test}}_{il},\ \hat{l}_\oplus(X_{il},T_{il})\right)  \right]^{-1/2},
\end{align}
where $Y^{\text{test}}_{il}$ and $\hat{l}_\oplus(X_{il},T_{il})$ denote, respectively, the $i^{\text{th}}$ actual and predicted responses in the test set, evaluated at age $T_{il}$ and total cognitive score $X_{il}$. We repeat this process $1000$ times, and compute RMPE for each split for the AD and CN subjects separately (See Table~\ref{Tab:rmpe}).


\begin{table}[h!]
	\centering
	\begin{tabular}{  ccccc } 
		\hline
		&$n_{\text{train}}$ &  $n_{\text{test}}$ & nonparametric CORE& partially global CORE)\\ 
		\hline
		AD & $52$ & $26$ &  $0.306$ 
		&$0.322$ \\
		\hline
		CN & $271$ & $100$ &  $0.151$ 
		& $0.167$ \\
		\hline
	\end{tabular}
	\caption{Average Root Mean Prediction Error (RMPE) over $1000$ repetitions for the AD and CN subjects, as obtained from the local fits of the nonparametric and partially global  CORE models.  Here, $n_{\text{train}}$ and $n_{\text{test}}$ denote the sample sizes for the  split training and testing datasets respectively.}
	\label{Tab:rmpe}
\end{table}
We observe that the out-of-sample predictions errors are quite low for both the AD and CN subjects. In fact they are in the ballpark of the in-sample-prediction error, calculated as the average distance between the observed training sample and the predicted objects based on the covariates in the training sets, which supports the proposed CORE  models.  The nonparametric model shows a better predictive performance than the partially global CORE.  

To confirm the group differences in the time-varying structure of the correlation matrices we further conduct a permutation test. To test the null hypothesis that, for varying age and total cognitive score values, the AD and CN subjects have the same conditional correlation matrix objects, we use the heuristic test statistic, measuring the average discrepancy of the fit for the AD and CN groups as
\begin{align}
\int S(x,t)  \ dx \ dt =  \int d_F^2\left( \hat{\Sigma}^{\text{AD}}(x,t), \hat{\Sigma}^{\text{CN}}(x,t)\right) \ dx \ dt.
\end{align}
Here $\hat{\Sigma}^{\text{AD}}(x,t)$ and $\hat{\Sigma}^{\text{CN}}(x,t)$ denote the estimated correlation matrix objects at total cognitive score $x$ and age $t$, for the AD and CN subjects respectively, with $x \in [5,70]$ and varying age $t \in [55,90]$ and $d_F(\cdot,\cdot)$ is the Frobenius norm between two matrix objects.

All the observations are pooled, and the test statistic calculated for every possible way of dividing the pooled values into two groups of size $174$ and $694$. The set of these calculated test statistic values is the exact distribution of possible differences under the null hypothesis. The p-value of the test is calculated as the proportion of sampled permutations where the computed test-statistic value is more than or equal to the test statistic value obtained from the observed sample. Using $10^6$ permutation samples, and the estimation methods being the nonparametric CORE and partially global CORE, the p-values are found to be $0.009$ and $0.002$ , respectively. Thus both the methods are able to detect a significant difference in the functional connectivity between the AD and CN subjects, providing evidence that the CORE model is useful to differentiate these groups. 
A further look into the time-varying regression fits for those connectivity hubs that show a change in the magnitude of the correlations across the AD and CN subjects (Figure~\ref{A6:Fig:connectivity} in the Supplementary Material Section~\ref{Appendix:A6}) also indicates  differences between  the AD and CN subjects. 

\subsection{Impact of GDP on human mortality}
\label{data:GDP}
The Human Mortality Database (\url{https://www.mortality.org/}) provides yearly life table data differentiated by gender for $37$ countries across $50$ years. For our analysis, we considered the life tables for males according to yearly age-groups varying from age $0$ to $120$ for 22 countries over 14 calendar years, 1997-2010. Life tables can be viewed  as histograms, which then can be smoothed with local least squares to obtain smooth estimated probability density functions for age at death. We carried this out for each year and country, using the Hades package available at \url{https://stat.ucdavis.edu/hades/} for smoothing the histograms with a choice of the bandwidth as 2 to obtain the age-at-death densities. Thus these data can be viewed as a sample of 
time-varying univariate probability distributions, for
a sample of $22$ countries, where the time axis represents $14$ calendar years and the observations made at each calendar year for each country correspond to the age at death distribution, over the age interval $[0,120]$, for that year. 
An illustration of the time-varying age at death distributions represented as density functions over the calendar years for four selected countries is in  Figure \ref{Figure A1} in the Supplementary Material.

The data on GDP per capita at current prices is available at  the World Bank Database at \url{https://data.worldbank.org}. Considering the observed age-at-death densities for the countries over the calendar years as time-varying random objects that reside  in the space of distributions equipped with the Wasserstein-2 metric, and GDP per capita for these countries as real-valued time-varying covariates, we fit the proposed concurrent object regression (CORE) models as described  in Section \ref{NPM} and \ref{PGM}. Figure \ref{Figure 3} illustrates the time-varying nature of the fitted nonparametric CORE model, as per  \eqref{eq:NPM:MODEL}. We observe that for a fixed calendar year $t$ the fitted densities appear to shift towards the right as the value of the covariate GDP increases,  thus indicating that GDP per capita is positively associated with longevity at  a fixed calendar year.  If alternatively moving along the calendar years for a fixed GDP-value, one again observes an increasing trend in longevity. 

\begin{figure}
	\centering
	\begin{minipage}[t]{0.45\textwidth}
		\hspace*{1cm}	\includegraphics[width=.75\textwidth]{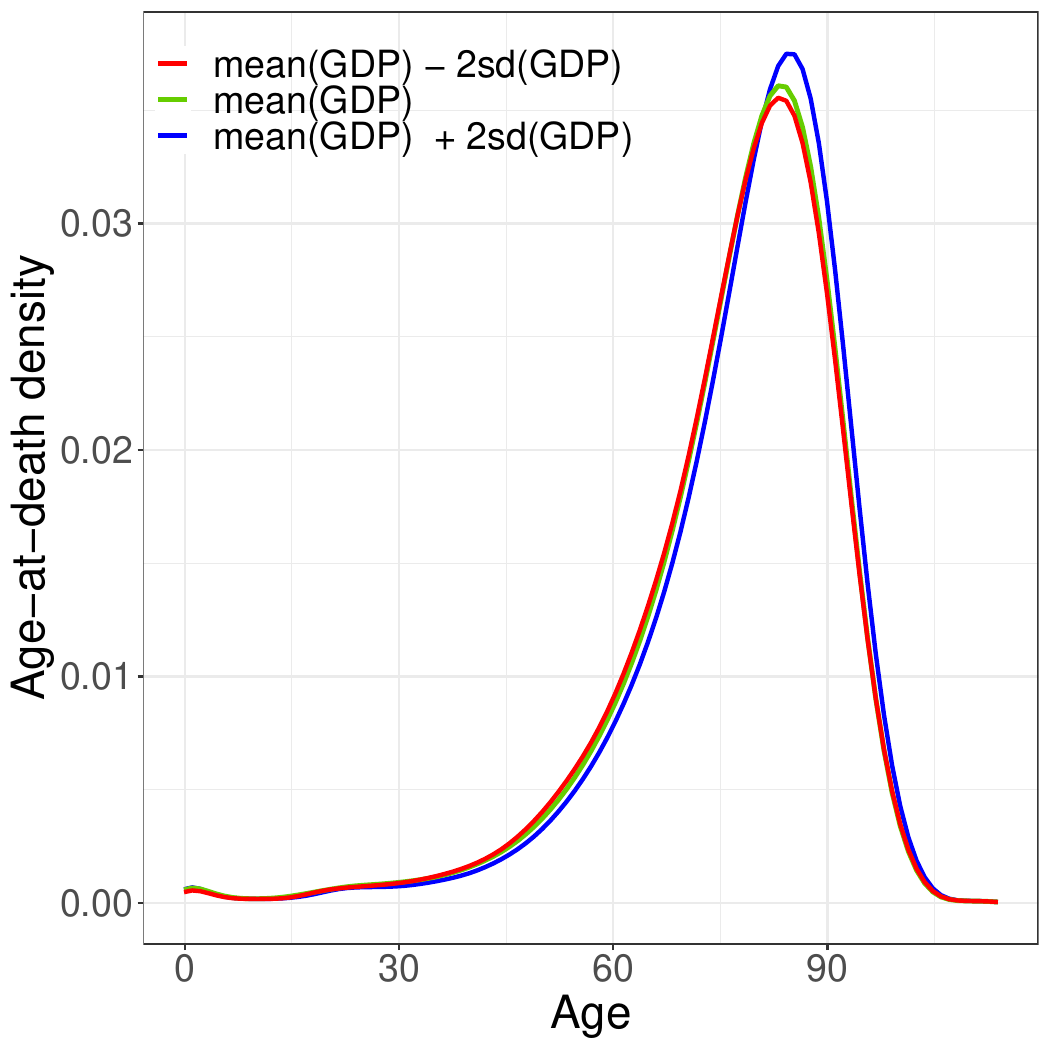}
		\label{Figure 3(a)}
	\end{minipage}
	\hfill 
	\begin{minipage}[t]{0.5\textwidth}
		\includegraphics[width = .9\textwidth]{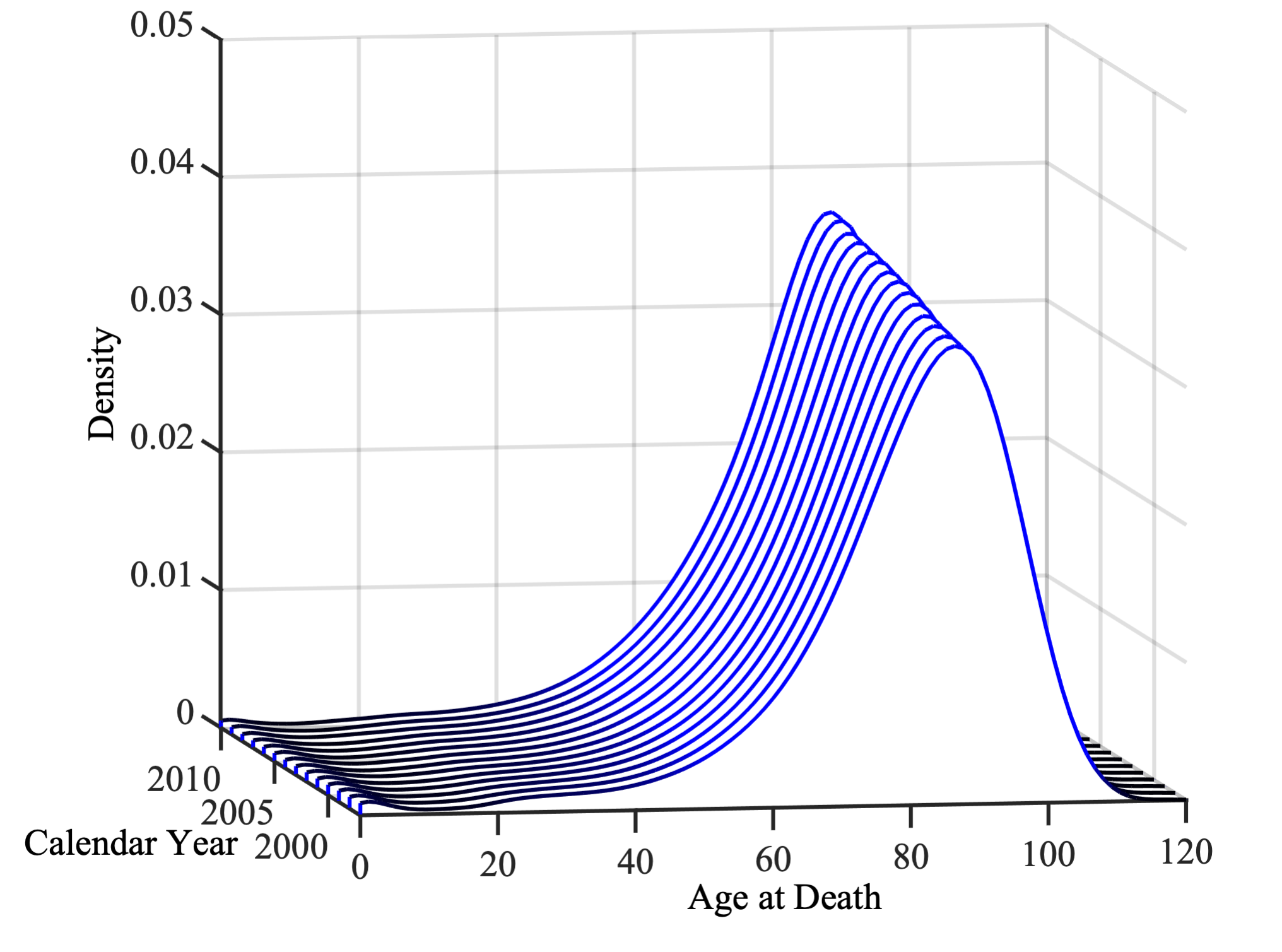}
		\label{Figure 3(b)}
	\end{minipage}
	\caption{Fitting the nonparametric concurrent object regression (CORE) model in \eqref{eq:NPM:MODEL}. In the left panel, the locally fitted densities of human mortality distributions, at the year $t = 2005$ and GDP value $x$ = mean(GDP)- 2 $\times$ sd(GDP), $x $= mean(GDP) and $x =$ mean(GDP) $ + 2 \times$ sd(GDP) are displayed in red, green and blue lines respectively.  The right panel shows the fitted densities for the U.S., varying over the years 1997-2010.}
	\label{Figure 3}
\end{figure}

Figure \ref{Figure 4} shows the 3D plots for the fitted densities over the years for four countries- Australia, Finland, Portugal and the U.S. We find that over the calendar years the modes for the age-at-death densities are shifted towards  older age and that the probability of death before age 5 declines for all the four countries, indicating increasing  life expectancy. Also, we notice that, for example, U.S. improves on child mortality over the years while for Finland it remains low throughout. These fits match quite well with  the observed densities in Figure \ref{Figure A1} (see the supplementary material).

\begin{figure}[!htb]
	\begin{minipage}{0.49\textwidth}
		\centering
		\includegraphics[width=.8\textwidth]{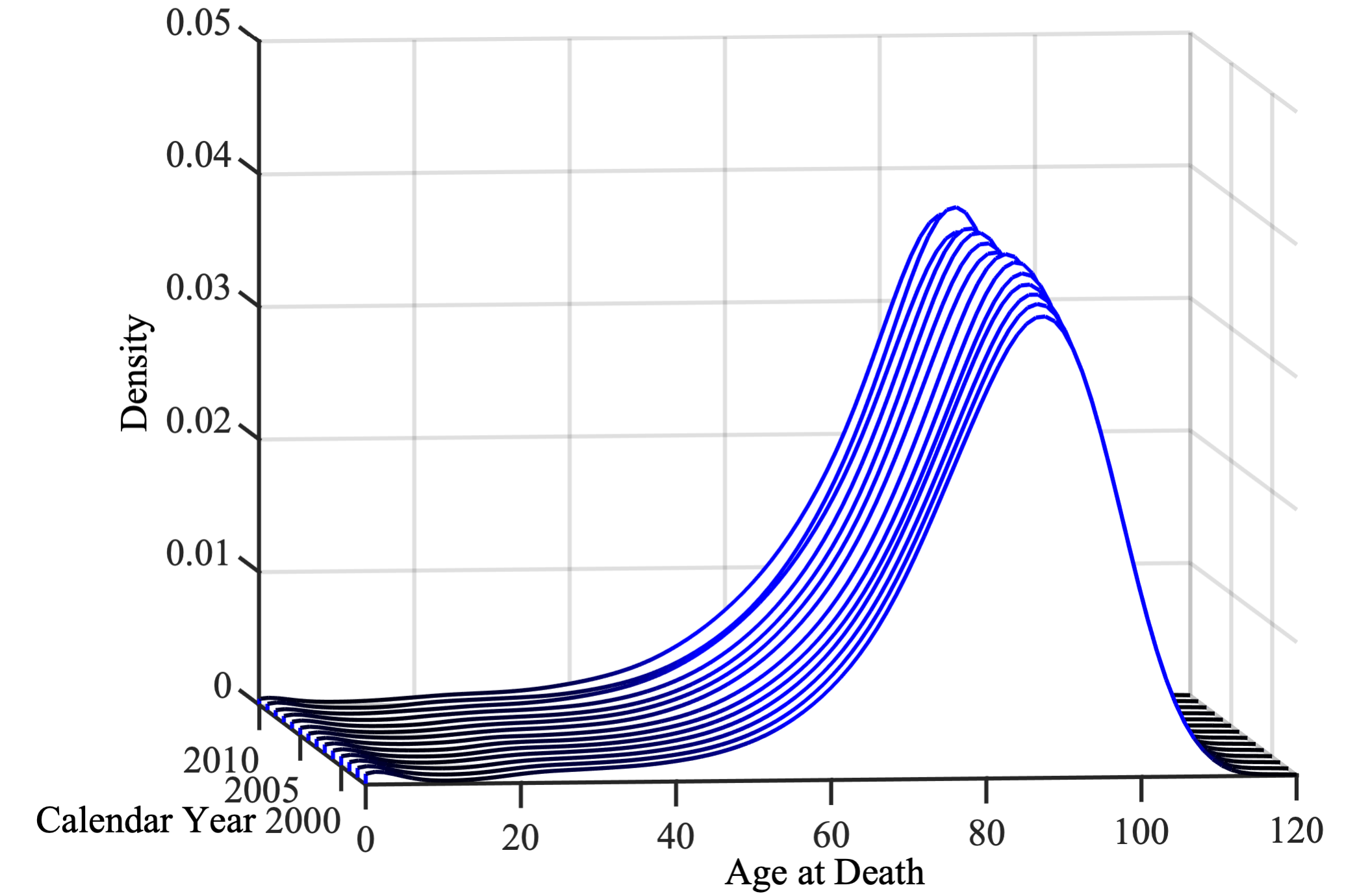}
	\end{minipage}\hfill
	\begin{minipage}{0.49\textwidth}
		\centering
		\hspace*{-.3cm}\includegraphics[width=.8\textwidth]{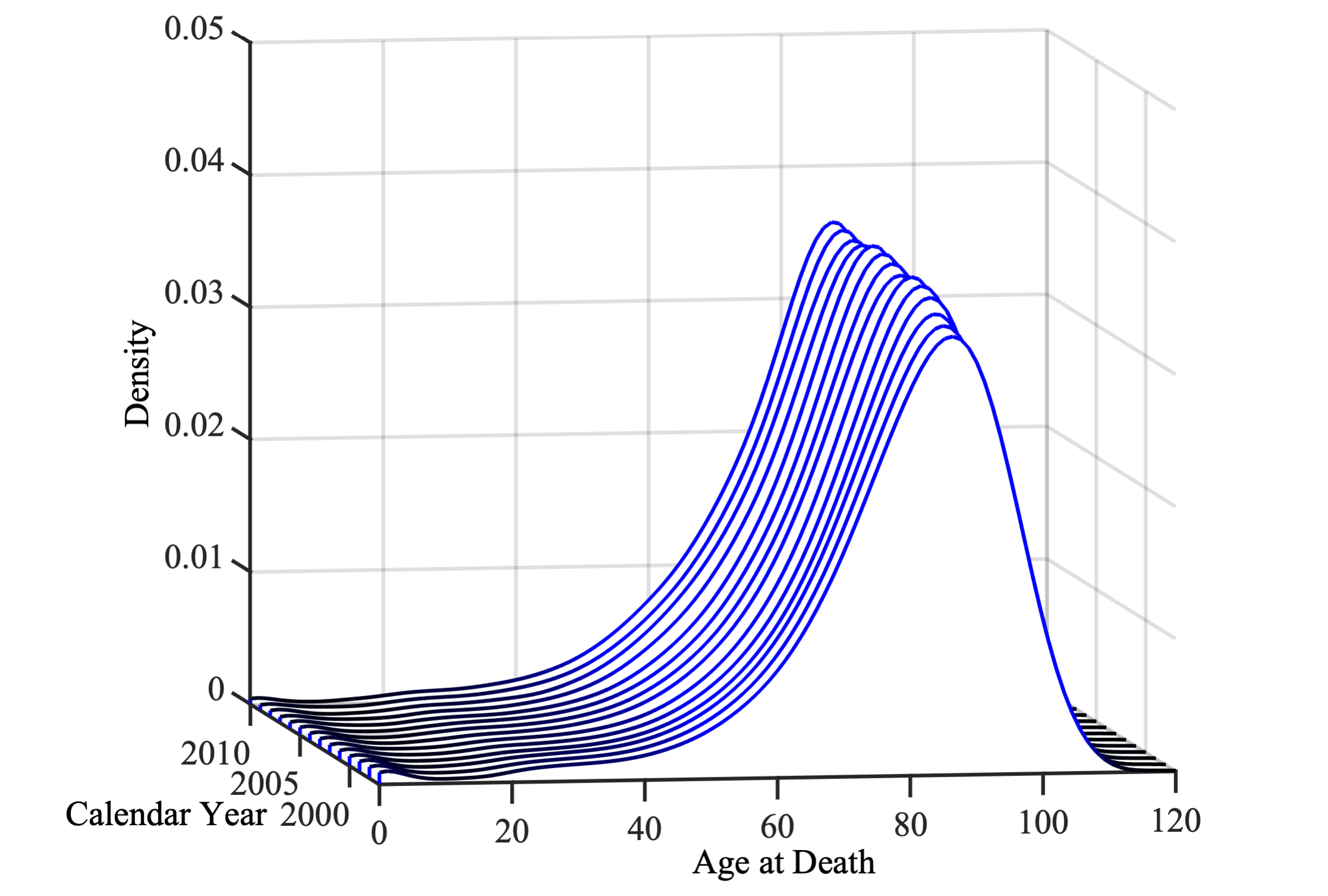}
	\end{minipage}\vspace*{.5cm}
	\begin{minipage}{0.49\textwidth}
		\centering
		\includegraphics[width=.8\textwidth]{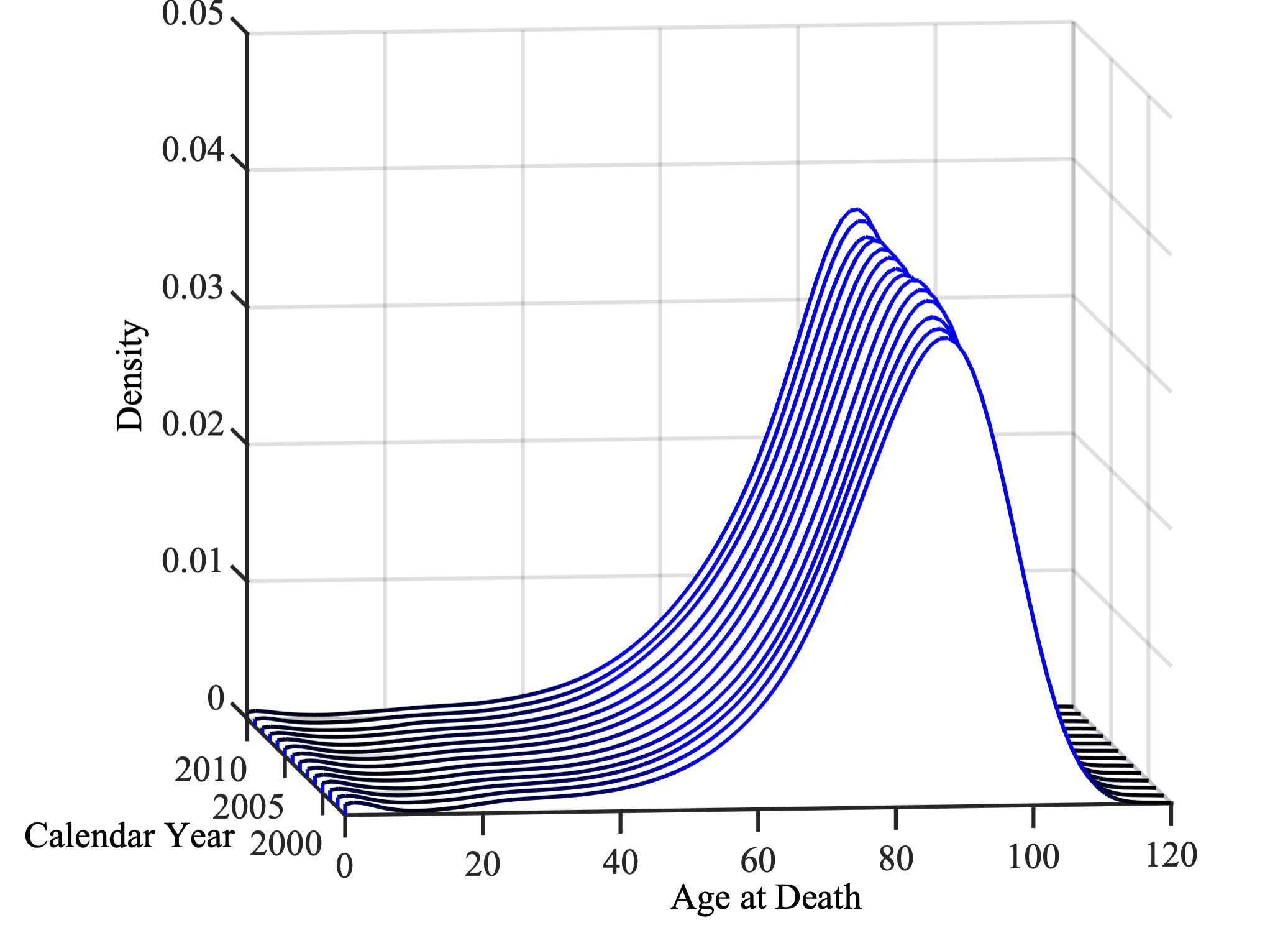}
	\end{minipage}
	\begin{minipage}{0.49\textwidth}
		\centering
		\includegraphics[width=.8\textwidth]{plots/us_pred-eps-converted-to}
	\end{minipage}
	\caption{Estimated age at death density functions over the years for males in Australia, Finland, U.S. and Portugal, clockwise in the four panels, starting at the upper left.}
	\label{Figure 4}
\end{figure}
We also fitted the partially global CORE, as defined in \eqref{eq:PGM:MODEL}, to the same data and compared their performance, where the effect of GDP is modeled as linear and the effect of calendar year as nonparametric. The left panel of Figure \ref{Figure 5} indicates that the fits are very similar at randomly chosen points $x = \text{mean(GDP)}; t =2005$.

\begin{figure}[!htb]
	\centering
	\hspace*{.65cm}\begin{minipage}[t]{0.48\textwidth}
		\includegraphics[width=.7\textwidth]{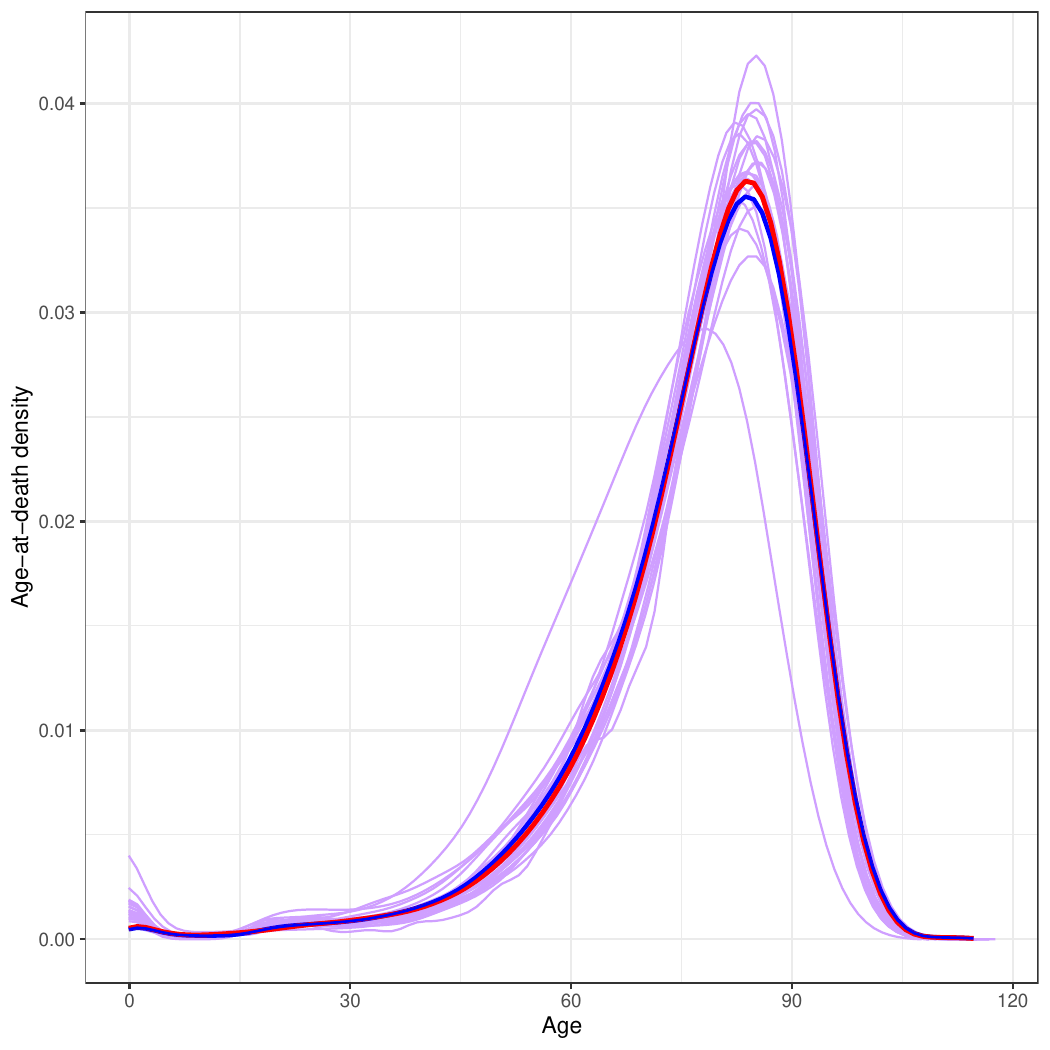}
		\label{Figure 5(a)}
	\end{minipage}
	\begin{minipage}[t]{0.47\textwidth}
		\includegraphics[width=.7\textwidth]{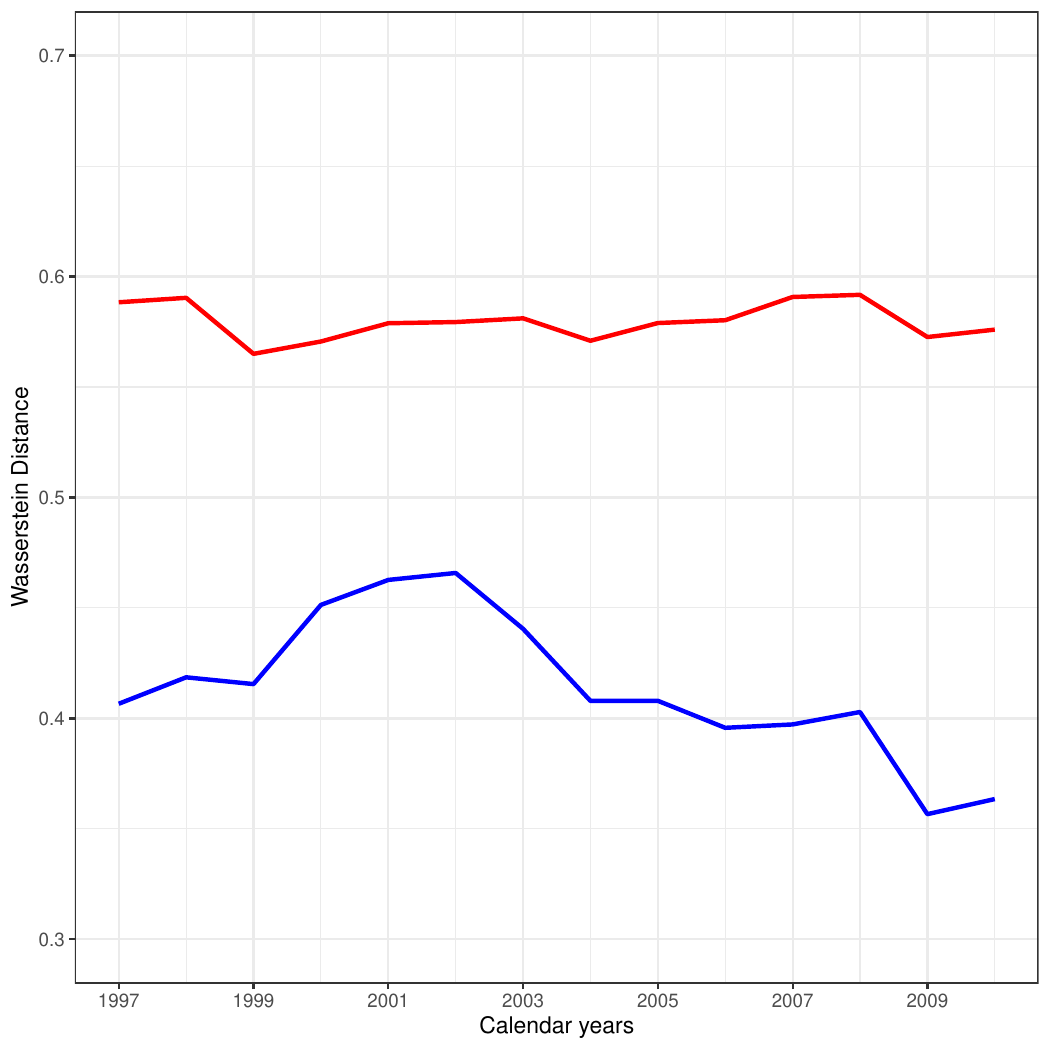}
		\label{Figure 5(b)}
	\end{minipage}
	\centering
	\caption{Comparing the fits  of partially global \eqref{eq:PGM:MODEL} and nonparametric \eqref{eq:NPM:MODEL} concurrent object regression (CORE) models. The left panel shows the local fits at the points $x = \text{median}(X), \,  t = 2005$, comparing both models. The blue and  red curves represent the nonparametric and the partially global regression fits, respectively. The purple curves in are the observed densities for the year $2005.$ In the right panel the average Wasserstein distances between the fitted and the observed densities across the calendar years for the nonparametric model (blue) and  the partially global model (red) are illustrated.}
	\label{Figure 5}
\end{figure}
For both  models, the bandwidth $h$ is chosen by leave-one-out cross validation method, as the minimizer of the mean discrepancy between the regression estimates and the observed age-at-death density functions and a Gaussian kernel is used. To investigate the comparative goodness-of-fit of the two models further, we  computed 
the average deviation of the fitted  from the observed densities for each of the 14 calendar years as
\begin{align}
\text{MSE}_\oplus(t) := d_W(f_\oplus(t),\hat{f}_\oplus(t)),
\label{eq:WASS:DEV}
\end{align}
$f_\oplus(t)$ and $\hat{f}_\oplus(t))$ being the observed and fitted age-at-death densities, respectively,  at  calendar years $t \in \{1997,\dots 2010\}$ and $d_W(\cdot,\cdot)$ the Wasserstein-2 distance between two densities (distributions).
Deviation \eqref{eq:WASS:DEV} is displayed in the right panel of Figure~\ref{Figure 5} for both the nonparametric and partially global CORE models. The nonparametric model seems to fit the data  better, which could indicate that the linear constraint for the impact of GDP imposed in the partially global Core model is likely not satisfied.   
The integrated deviance $\int_{\mathcal{T}} \text{MSE}_\oplus(t) dt$ is  $0.413$ for the nonparametric CORE and $0.580$ for the Partially CORE. 

\section{Concluding remarks}\label{CONCL}
The proposed concurrent object regression (CORE) is useful for  the regression analysis of random objects, where it complements Fr\'echet regression, by extending the notion of conditional Fr\'echet means further to a  concurrent or varying coefficient framework. We provide theoretical justifications including rates of pointwise convergence  for both global and local versions of the CORE model, and a uniform convergence result for the global part.  
For the special case of Euclidean objects the rates of convergence correspond to the known optimal rates.  The rate of convergence for the nonparametric CORE model is intrinsically connected to an inherent manifold structure of the predictor space. 
Analogously to local regression, the nonparametric estimators will suffer from the curse of dimensionality if the predictor space is of  higher  dimension than $p=2$ or $p= 3$.  This calls for future research in dimension reduction in the predictor space. 
A feature of interest is that we  do not require observing the  complete stochastic processes $\{\left(X(t), Y(t)\right) : t\in \T \}$ but only need samples taken at random predictor times, and our methods can be adapted for sparse and longitudinal predictors.

\section{Acknowledgments}
	Data collection and sharing for this project was funded by the Alzheimer's Disease
	Neuroimaging Initiative (ADNI) (National Institutes of Health Grant U01 AG024904) and DOD ADNI (Department of Defense award number W81XWH-12-2-0012). As such, the investigators within the ADNI contributed to the design and implementation of ADNI and/or provided data but did not participate in analysis or writing of this report. A complete listing of ADNI investigators can be found at: \url{http://adni.loni.usc.edu/wp-content/uploads/how_to_apply/ADNI_Acknowledgement_List.pdf}.

\section{Appendix}
\subsection{Background for the partially global  concurrent object regression}\label{Appendix:A2}

Motivation of deriving  \eqref{eq:PGM:MODEL} for the Euclidean response case. 	When $(\Omega,d) = (\reals,d_E)$, we write $m_\oplus(\cdot,\cdot) = m (\cdot,\cdot)$. Assuming the true relation between the response $Y$ and the predictor $X(T)$ is  linear while there is a smooth  nonparametric relation in the $T$ direction, a partially local linear type estimator of the regression model $m(\cdot,\cdot)$ at the point $T =t,\ X(T) =x$ is given by
		$
		\hat{m}(x,t) = \hat{a}^T(x-\mu_X(t)) +\hat{\beta}_0,
		$
		\text{where} $\mu_X(t) =\E{X|T=t} =\E[X|T=t]{X(t)}\text{  for all } t \in \T.$ This can be written alternatively as
		\[
		(\hat{a}, \hat{\beta_0},\hat{\beta_1}) = \underset{a, \beta_0,\beta_1}{\text{argmin}} \frac{1}{n} \sum_{i=1}^{n} \left[ \frac{1}{n_i} \sum_{l=1}^{n_i} K_h(T_{il} - t)(Y_{il} - a^T(X_{il}-\mu_X(t))- \beta_0 - \beta_1(T_{il} - t))^2\right].
		\]
		We can view this as an M-estimator of an intermediate population model,
		\[\tilde{g}(x,t) = (a_1^*(x,t))^\intercal(x-\mu_X(t)) + \beta_0^*(t), \text{ where} \]
		
		\begin{align*} &(a_1^*, \beta_0^*,\beta_1^*)\\ &=\underset{a_1, \beta_0,\beta_1}{\text{argmin}} \displaystyle \int \left[ \int y dF_{Y|X,T}(y,x,t) - a_1^\intercal(x-\mu_X(t)) - \beta_0 - \beta_1(s-t)\right]^2 K_h(s-t) dF_{X,T}(x,s).
		\end{align*}
\noindent Defining as before the following auxiliary parameters for $j=0,1,2, $\\\\
		$
		\mu_{0j} :=  \E{K_h(T-t)(T-t)^j} , \quad  \Sigma_{2j} := \E{K_h(T-t)(T-t)^j(X(T)-\mu_X(t))(X(T)-\mu_X(t))^\intercal}  ,\label{eq:a2.5}\\ 
		r_{0j} := \E{K_h(T-t)(T-t)^jY} , \quad r_{1j} := \E{K_h(T-t)(T-t)^j Y(X(T)-\mu_X(t))} , \nonumber \\ \sigma_0^2 := \mu_{02}\mu_{00} - \mu_{01}^2. \label{a2.6}
		$\\\\
		and solving the minimization problem leads to \\
		\[a_1^* = \Sigma^{-1}_{20} r_{10}, \quad \beta_0^* = \frac{r_{00}\mu_{02}-r_{01}\mu_{01}}{\sigma_0^2},\quad \beta_1^* = \frac{r_{01}\mu_{00}-r_{00}\mu_{01}}{\sigma_0^2}.\]
		Putting the optimal values of the parameters back in the model,\\
		\[
		\tilde{g} (x,t )  = a_1^*(x,t)(x-\mu_X(t)) + \beta_0^*(x,t)  = \displaystyle \int s^G(z,x,s,t,h) y dF(y,z,s) 
		= \E{s^G\left(X,x,T,t,h\right)Y},\label{eq:a2.7}
		\]
		with  weight function,
		\[	s^G(z,x,s,t,h) = \underbrace{K_h(s-t)\left[(z-\mu_X(t))^\intercal  \Sigma^{-1}_{20} (x-\mu_X(t))\right]}_{:=s_1(z,x,s,t,h)} + \\
		\underbrace{\frac{1}{\sigma_0^2} K_h(s-t)\left(\mu_{02}-(s-t)\mu_{01} \right)} _{:=s_2(s,t,h)} \label{a2.8}.
		\]
		Rewriting the framework as the weighted Fr\'echet mean w.r.t the Euclidean metric,
		\[
		\tilde{g}(x,t) = \underset{y \in \reals}{\text{argmin}}\  \E{s^G(X,x,T,t,h)(Y-y)^2} = \underset{y \in \reals}{\text{argmin}}\ \E{s^G(X,x,T,t,h)d_E^2(Y,y)},
		\]
		where $\tilde{g}$ can be viewed as a smoothed version of the true regression function $m$ with bias $m(x,t) - \tilde{g}(x,t) = o(1)$. This alternative formulation of the combination of a global and a local regression component thus provides the  intuition to define the general population model for metric-space valued random objects as
		\[
		\tilde{g}_\oplus(x,t) = \underset{\omega \in \Omega}{\text{argmin }} \tilde{G}_\oplus(\omega), 
		\text{where, } \ \tilde{G}_\oplus(\omega) := \E{s^G(X,x,T,t,h)d^2(Y,\omega)}.
		\]
\subsection{Technical assumptions (B1)-(B6) and (U1)-(U4) in section~\ref{PGM}}\label{Appendix:A3}
	
	
	The following is a list of these assumptions which are required for section 4. 
	\begin{enumerate}[label = (B\arabic*)]
		\item \label{B1} The kernel function $K$ is a univariate probability density that is symmetric around zero, with $|K^\gamma_{0j}|=| \int K^\gamma(u)u^j \ du \displaystyle| < \infty\text{ for } j=1,\dots, 4$ and $\gamma = 0,1,2.$
		\item \label{B2} The marginal density $f_{(X,T)}(x,t)$  and the conditional density $f_{(X,T)|Y}(x,t,y)$ exist, are twice continuously differentiable as a function of $t$ for all $x$ and  all $y$. 
		\item \label{B3} The Fr\'echet means $m_\oplus(x,t), \tilde{g}_\oplus(x,t), \hat{g}_\oplus(x,t)$ exist and are unique.
		\item \label{B4} For any $\epsilon>0,$
		\[
		\underset{d(\omega,m_\oplus(x,t))>\epsilon}{\inf} \left(M_\oplus(\omega,x,t)- M_\oplus(m_\oplus(x,t),x,t)\right) >0.\]
		\[\underset{d(\omega,\tilde{g}_\oplus(x,t))>\epsilon}{\inf} \left(\tilde{G}_\oplus(\omega,x,t)- \tilde{G}_\oplus(\tilde{g}_\oplus(x,t),x,t)\right) >0.\]
		\item \label{B5} There exist $\eta_1 >0,\,  C_1 >0$, \, $\text{with } d(\omega,m_\oplus(x,t)) < \eta_1$ such that \\
		\[
		M_\oplus(\omega,x,t) - M_\oplus(m_\oplus(x,t),x,t) \geq C_1d(\omega,m_\oplus(x,t))^2.
		\]
		\item  \label{B6} There exist $\eta_2 >0, C_2 >0$, $\text{with } d(\omega,\tilde{g}_\oplus(x,t)) < \eta_2$ such that \\
		\[
		\underset{N}{\text{lim inf }} \left[ \tilde{G}_\oplus(\omega,x,t)  - \tilde{G}_\oplus(\tilde{g}_\oplus(x,t),x,t) \right]  \geq C_1d(\omega,\tilde{g}_\oplus(x,t))^2.
		\]
	\end{enumerate} 
	These assumptions are required to ensure the existence and uniqueness of the Fr\'echet mean in the population and sample cases and the local curvature of the objective functions near their respective minimums to establish consistency of the partially global concurrent object regression (CORE) estimator. Also the relevant entropy conditions are necessary to prove the rate of convergence of the CORE estimator.\\
	For proving the uniform convergence results in the $X$-direction for any fixed value of $t$, the following additional conditions are used.
	\begin{enumerate}[label = (U\arabic*)]
		\item \label{U1} For almost all $x$ such that $||x||_E \leq M $, the Fr\'echet means $m_\oplus(x,t), \tilde{g}_\oplus(x,t), \hat{g}_\oplus(x,t)$ exist and are unique.
		\item \label{U2} For any $\epsilon>0$, \[\underset{||x||_E \leq M}{\inf} \underset{d(\omega,m_\oplus(x,t))>\epsilon}{\inf} \left(M_\oplus(\omega,x,t)- M_\oplus(m_\oplus(x,t),x,t)\right) >0.\]
		Also, there exists $\zeta = \zeta(\epsilon)$ such that
		\[
		P\left(\underset{||x||_E \leq M}{\inf} \underset{d(\omega,\hat{g}_\oplus(x,t))>\epsilon}{\inf} \hat{G}_\oplus(\omega,x,t)- \hat{G}_\oplus(\hat{g}_\oplus(x,t),x,t ) \geq \zeta \right) \to 1.\]
		\item \label{U3} With $\mathcal{B}_\delta(m_\oplus(x,t))$ and $N(\epsilon, \mathcal{B}_\delta(m_\oplus(x,t)),d)$, as defined in Assumption~(A\ref{A7})
		\[\int_{0}^{1} \underset{||x||_E \leq M}{\sup}\sqrt{1+ \log N(\delta\epsilon, \mathcal{B}_\delta(m_\oplus(x,t)),d)} d\epsilon = O(1) \text{ as} \delta \rightarrow 0.\]
		\item \label{U4} There exist constants $\tau>0, D>0$ and $\alpha>2$ possibly depending on $M$ such that, for any given $t$,\\
		\[\underset{||x||_E \leq M}{\inf} \underset{d(\omega, m_\oplus(x,t))<\tau}{\inf} M_\oplus(\omega,x,t) - M_\oplus(m_\oplus(x,t),x,t) - Dd(\omega,m_\oplus(x,t)) \geq 0.\]
	\end{enumerate}

	\subsection{Additional figures} \label{Appendix:A6}
	We present here some additional figures that are referred to in the main paper in the context of simulation studies and real data applications in Sections~\ref{SIM} and~\ref{DATA} respectively.
	
	\subsubsection*{Additional figure from simulation studies in Section~\ref{SIM}}
	
	The performance of the proposed partially global concurrent object regression (CORE) model is compared to the global Fr\'echet regression (GFR) method from~\cite{pete:mull:19}. In the latter, the nested structure of the predictor space $(T,X(T))$ is ignored and thus $T\in \reals$ and $X\in\reals^{p}$ are treated as a $p+1$ dimensional predictor input for the model. The data generating mechanism is as described in Setting I of Section~\ref{SIM:dens}, with the each component of the predictor process $X(\cdot) \in \reals^p$ assumed to be uncorrelated. The proposed partially global CORE method outperforms GFR in all cases. 
	\begin{figure}[!htb]
		\centering
		\includegraphics[width=\textwidth]{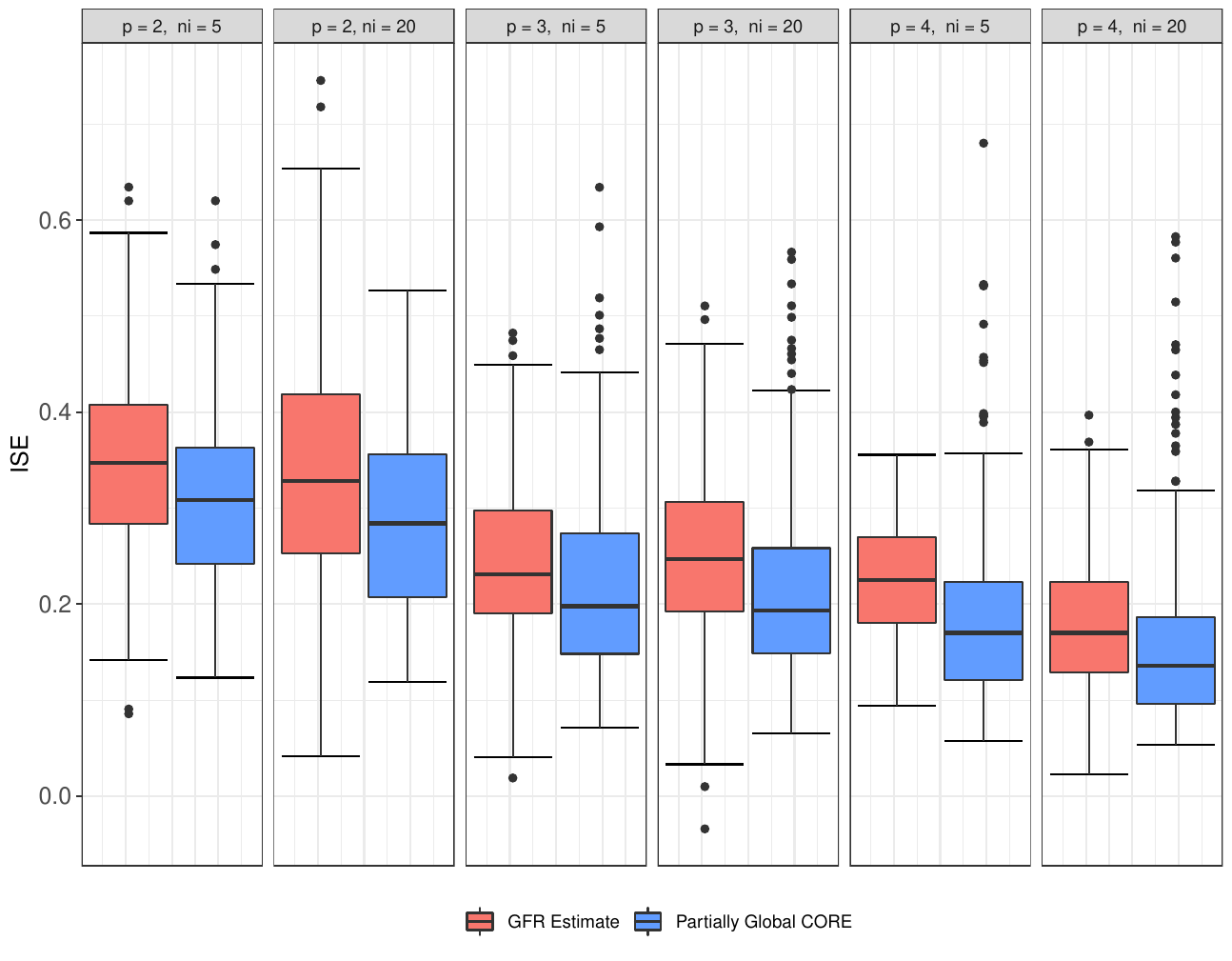}
		\caption{Figure shows the comparative performance of the proposed partially global concurrent object regression (CORE) method to that of global Fr\'echet regression (GFR) with increasing the predictor dimension $p$ for distributional object responses. The sample sizes are kept fixed at $n= 1000$ and dense and sparse designs are considered with $n_i =5 $ and $n_i =20$ respectively. }
		\label{sim:dens:dim}
	\end{figure}
	\clearpage
	
	\subsubsection*{Additional figures from real-data applications in Section~\ref{DATA}}
	The following figures show additional illustrations for the data application for brain connectivity in Alzheimer's disease in Section~\ref{DATA:fMRI}, where pairwise connectivity correlation matrices are considered as random object responses varying with age, and the predictors taken were age and cognitive score changing with age.
	\begin{figure}[!htb]
		\centering
		\includegraphics[width=.8\textwidth]{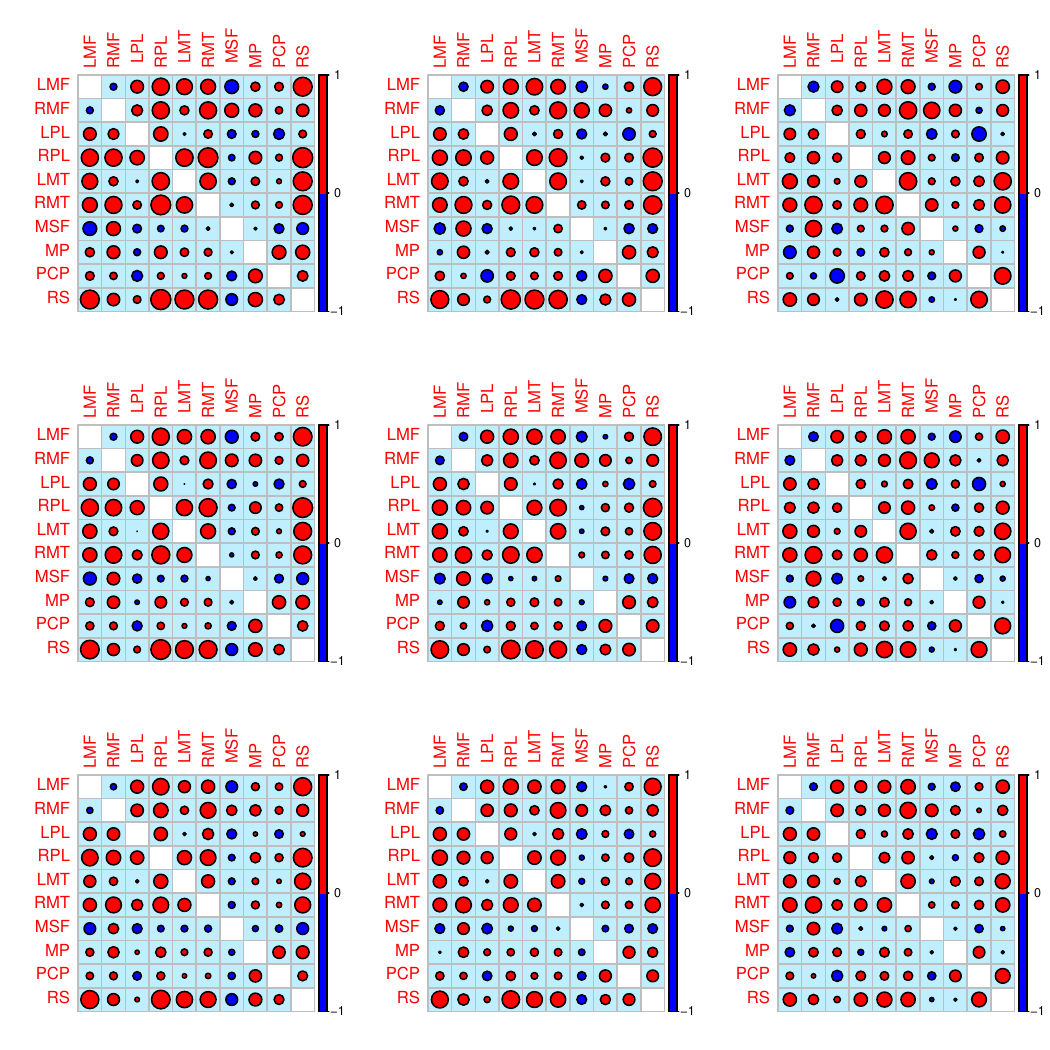}
		\caption{Estimated correlation matrix for the CN subjects fitted locally using nonparametric CORE in \eqref{eq:NPM:MODEL} illustrating the dependence of functional connectivity on total cognitive score which gets modulated by age. The arrangement of the panels are the same as that of Figure~\ref{Figure AD_NPM_fit}.}
		\label{A6:Fig:CN_NPM_fit}
	\end{figure}
	
	Figure~\ref{A6:Fig:CN_NPM_fit} displays the connectivity correlation matrices for the CN subjects, estimated using the nonparametric CORE method locally over a score of different output points. This elicits a the regression relationship between the functional connectivity matrix and the total cognitive scores in Section~\ref{DATA:fMRI}, which is further altered by age. Quite contrary to the case of the AD subjects (~\ref{Figure AD_NPM_fit}), here we observe a prominence of positive correlations between the brain parcellation throughout, in terms of stronger magnitude and higher number. This might well be indicatve of a better inter-hub functional connectivity in the CN subjects. Over increasing age we observe a higher value for the total cognitive score which can be associated with a weaker inter-hub connectivity overall. 
	The reduction in Negative Functional Correlation (NFC) for CN subjects is still noted but the evolution is not so drastic over age.
	In addition, the estimated correlation matrices for the CN subjects exhibit specific patterns of dependency over the connectivity hubs, which, in case of the estimated correlation matrices for the AD subjects is not as discernible. A further application of the partially global model gives evidence along the same line as the nonparametric CORE model (Figure~\ref{A6:Fig:CN_PGM_fit}). However, the in-sample goodness of fit measured by the integrated deviance statistic (see \eqref{eq:FROB:DEV} in Section~\ref{DATA:fMRI}) for the former ($0.0056$) is marginally better than the latter ($0.0071$), accounting for a better performance of the partially global Model.

	\begin{figure}[!htb]
		\centering
		\includegraphics[width=.8\textwidth]{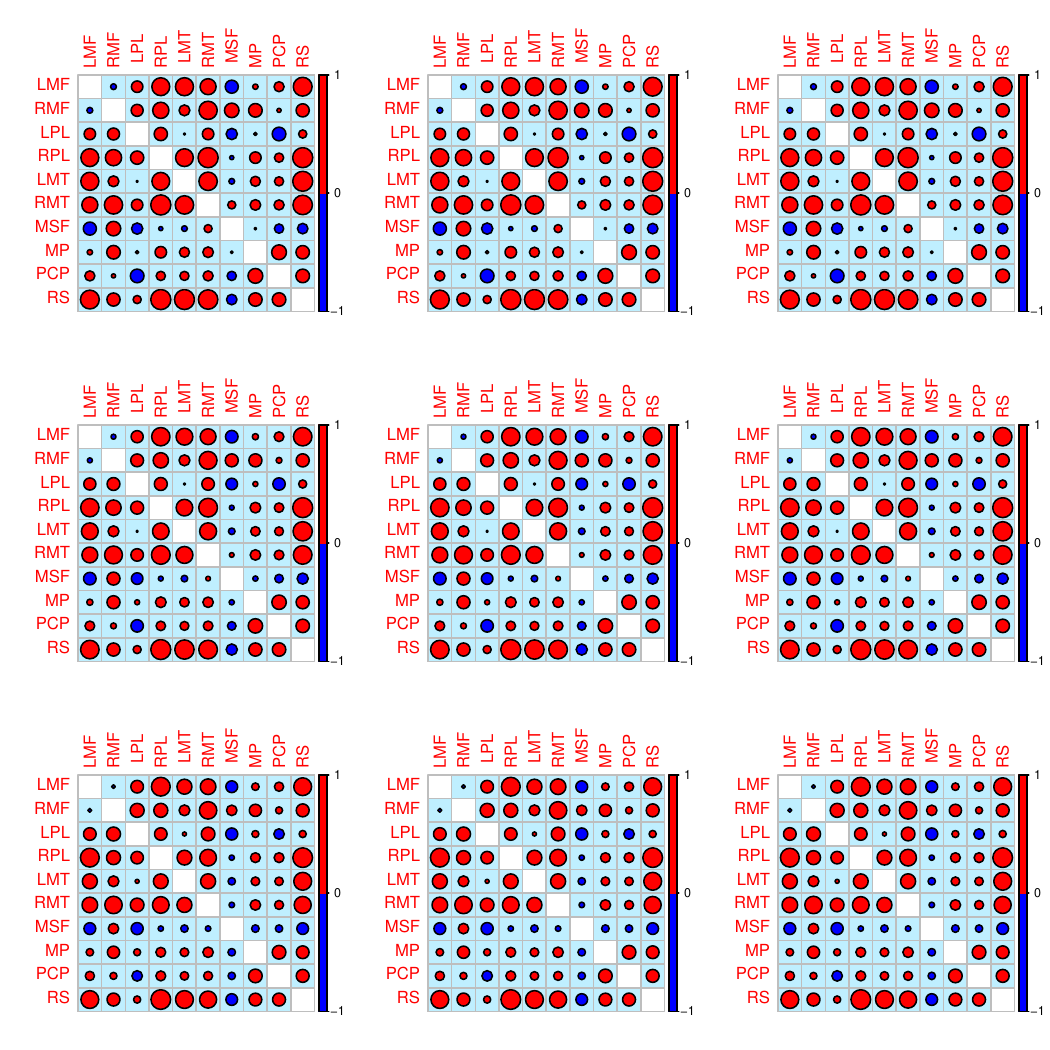}
		\caption{Estimated correlation matrix for the CN subjects fitted locally using nonparametric CORE in \eqref{eq:NPM:MODEL} illustrating the dependence of functional connectivity on total cognitive score which gets modulated by age. The arrangement of the panels are the same as that of Figure~\ref{Figure AD_NPM_fit}.}
		\label{A6:Fig:CN_PGM_fit}
	\end{figure}

	\begin{figure}[!htb]
		\centering
		\includegraphics[width=.8\textwidth]{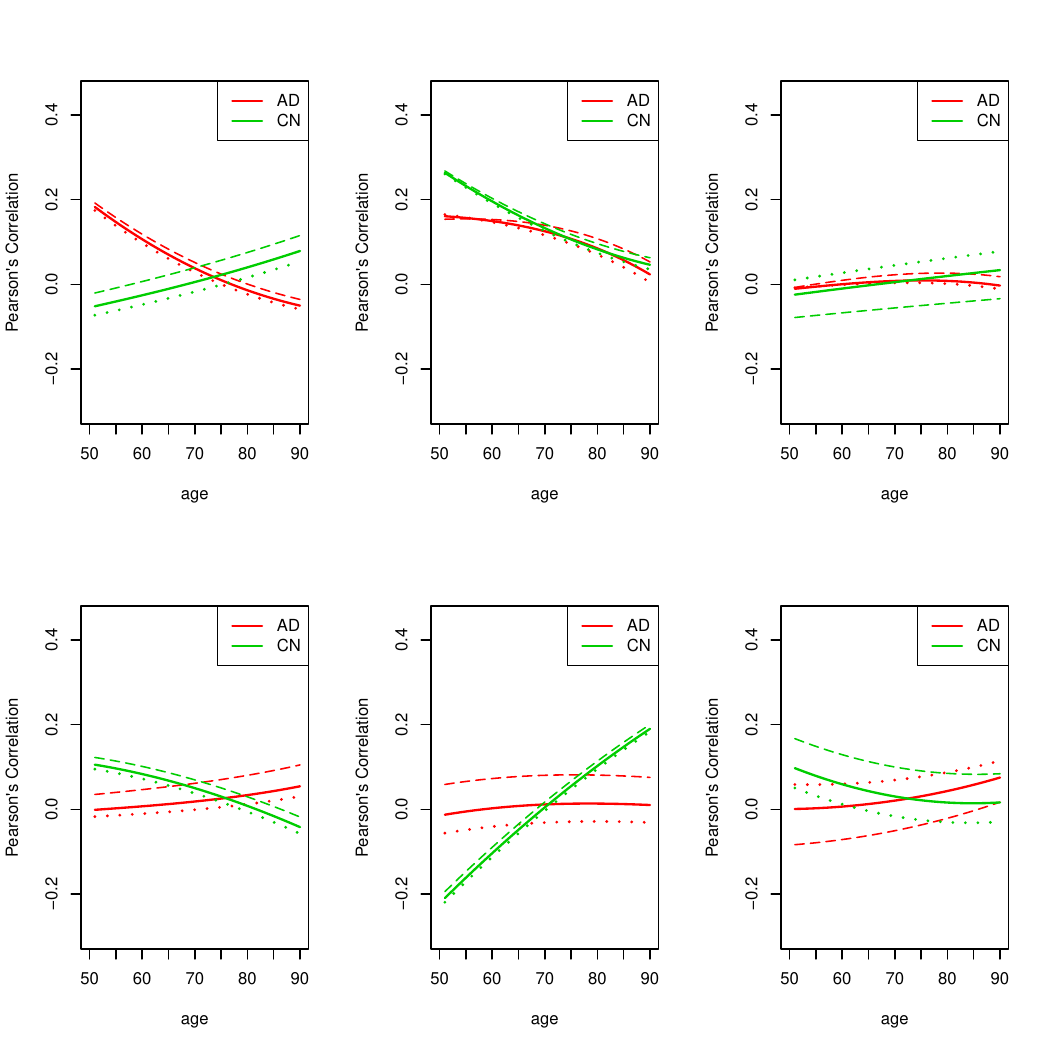}
		\caption{Fitted correlations for varying total cognitive scores and ages across the AD and CN subjects for six chosen connectivity hubs LMF-vs-LPL, RMF-vs-LPL, LPL-vs-RS, PCP-vs-RS, RMT-vs-PCP, MSF-vs-PCP (clockwise in the six panels, starting at upper left). The dashed, solid and dotted lines represent the estimated correlation at $x= 10\%, \ 50\%,\ $ and $90\% \ $ quantiles of the total cognitive score, respectively, for varying ages. For the CN subjects, the inter-hub correlations appear higher for a lower value of the total cognitive score however, for the AD subjects such pattern is not evident. The correlations get generally weaker with higher age.}
		\label{A6:Fig:connectivity}
	\end{figure}
	\clearpage
	The following figure is an additional illustration for the real data application for impact of GDP on human mortality, where a sample of age-at-death densities were treated as the distributional object responses varying with calendar years for $22$ countries and GDP data of each country, for changing calendar year were considered as predictors. The figure shows the 3D plots for the observed age-at-death distributions, represented as densities, over the years for four countries- Australia, Finland, Portugal and the U.S., as is referred to in the main paper in Section~\ref{data:GDP}.
	
	\begin{figure}[H]
		\hspace{-1.1cm}\begin{minipage}{0.5\textwidth}
			\centering
			Australia
			\includegraphics[width=.8\textwidth]{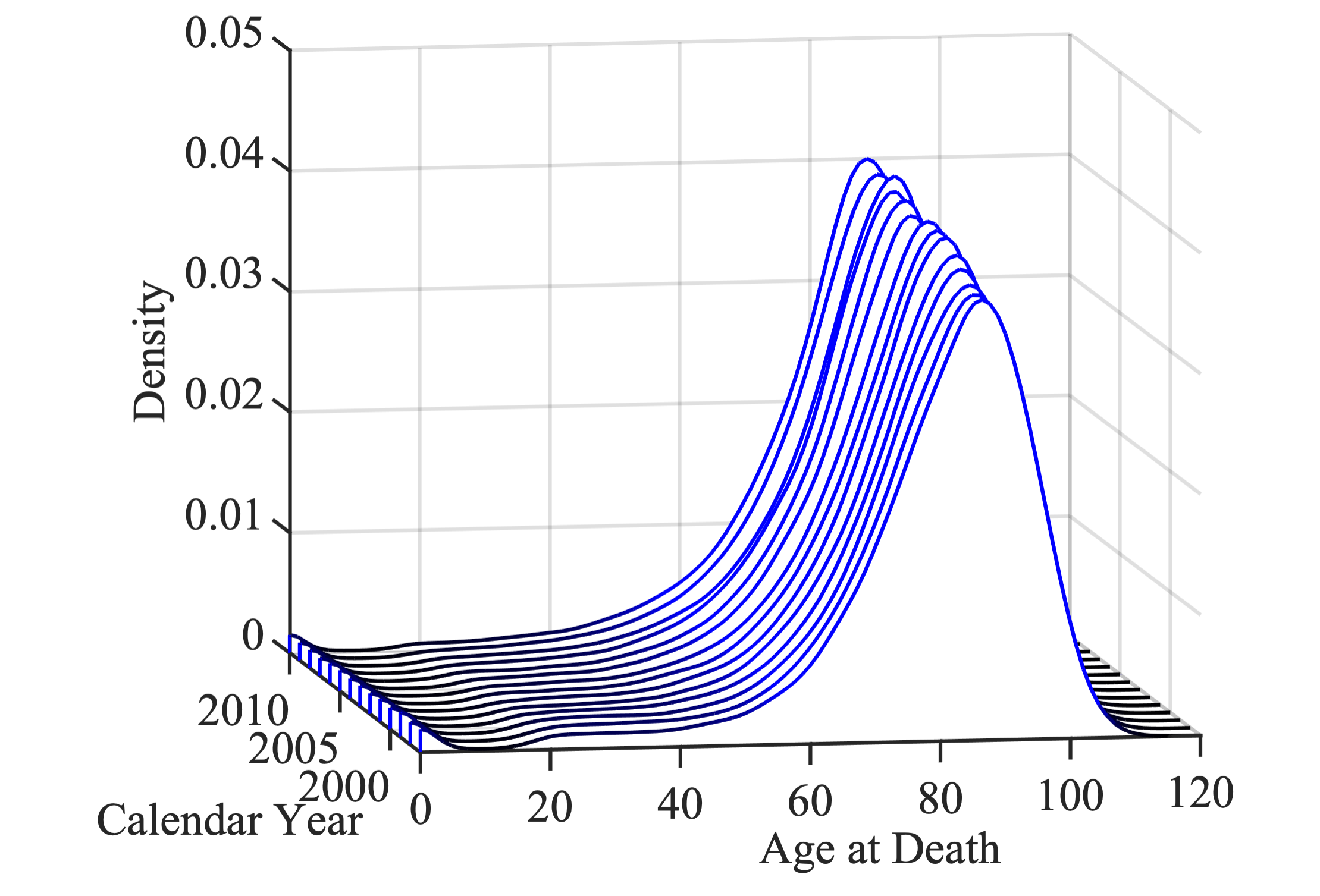}
		\end{minipage}\hfill
		\begin{minipage}{0.5\textwidth}
			\centering
			Finland
			\includegraphics[width=.8\textwidth]{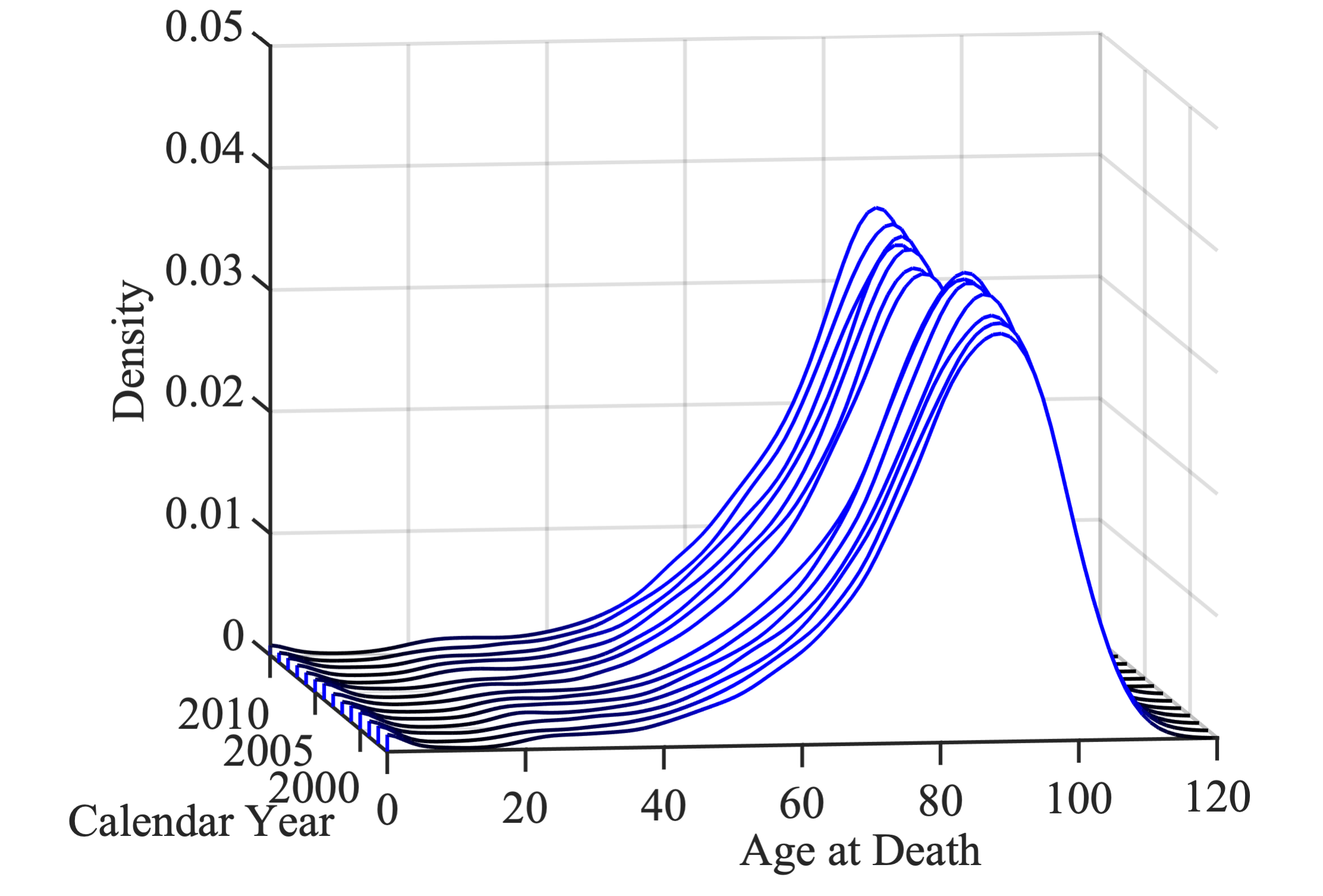}
		\end{minipage}\vspace*{.5cm}
		\hspace*{-1cm}\begin{minipage}{0.5\textwidth}
			\centering
			Portugal
			\includegraphics[width=.8\textwidth]{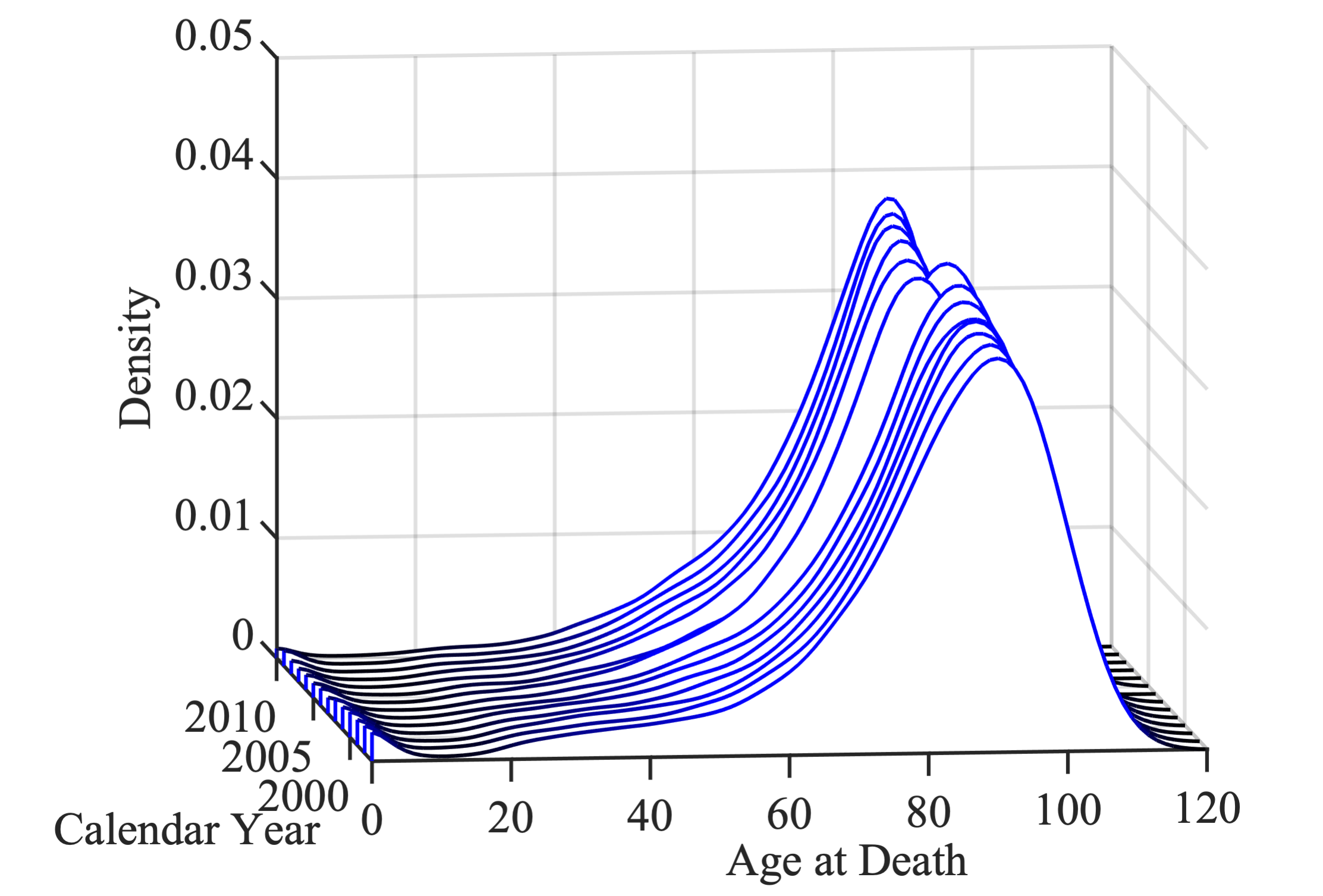}
		\end{minipage}
		\hspace*{1.3cm}\begin{minipage}{0.5\textwidth}
			\centering
			U.S.
			\includegraphics[width=.8\textwidth]{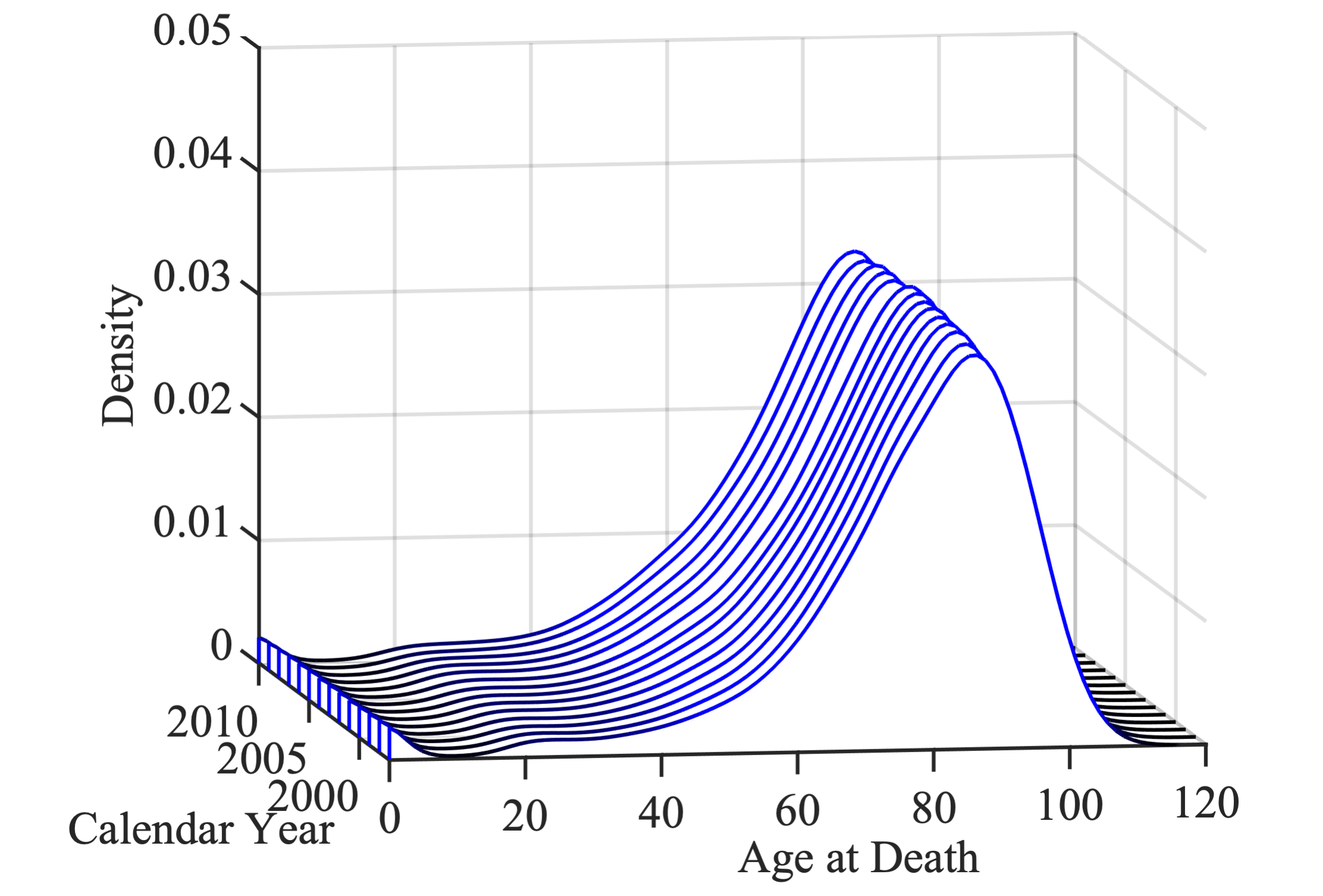}
		\end{minipage}
		\caption{The observed time-varying age at death density functions over the years for males in Australia, Finland, U.S. and Portugal, clockwise in the four panels, starting at the upper left.}
		\label{Figure A1}
	\end{figure}

\bibliography{core}
\end{document}